\documentclass[journal,doublecolumn]{IEEEtran}

\usepackage{graphicx}
\usepackage{subfig}
\usepackage{amsmath}
\usepackage{amsthm} 
\usepackage{amssymb}
\usepackage[font=footnotesize]{caption} 
\usepackage{subfig} 
\usepackage[noadjust]{cite} 
\usepackage{color}
\usepackage{algorithm} 
\usepackage{algorithmic}
\usepackage{enumerate}
\usepackage[hidelinks,colorlinks=false]{hyperref}
\usepackage[left=0.75in, right=0.75in, top=0.75in, bottom=0.75in]{geometry}
\usepackage{authblk} 
\usepackage{cite}

\usepackage{bm}
\usepackage{tikz}
\usetikzlibrary{calc} 
\usetikzlibrary{shapes} 
\usetikzlibrary{chains}
\usetikzlibrary{fit}
\usetikzlibrary{arrows}
\usetikzlibrary{decorations.text} 
\usetikzlibrary{decorations.markings}
\usetikzlibrary{decorations.pathmorphing} 
\usetikzlibrary{shadows}
\usetikzlibrary{patterns}
\usetikzlibrary{matrix}
\usepackage{pgfplots}
\usepackage[europeanresistors]{circuitikz}
\usepackage[outline]{contour} 
\contourlength{1.5pt}

\newtheorem{theorem}{Theorem}
\newtheorem{lemma}{Lemma}
\newtheorem{assumption}{Assumption}
\newtheorem{remark}{Remark}

\newtheorem{corollary}{Corollary}

\newtheorem{definition}{Definition}



\graphicspath{{figures/}}
\usepackage{lipsum}
\usepackage{makecell}
\usepackage{verbatim}

\makeatletter
\newenvironment{breakablealgorithm}
{
		\begin{center}
			\refstepcounter{algorithm}
			\hrule height.8pt depth0pt \kern2pt
			\renewcommand{\caption}[2][\relax]{
				{\raggedright\textbf{\ALG@name~\thealgorithm} ##2\par}%
				\ifx\relax##1\relax 
				\addcontentsline{loa}{algorithm}{\protect\numberline{\thealgorithm}##2}%
				\else 
				\addcontentsline{loa}{algorithm}{\protect\numberline{\thealgorithm}##1}%
				\fi
				\kern2pt\hrule\kern2pt
			}
		}{
		\kern2pt\hrule\relax
	\end{center}
}
\makeatother

\IEEEoverridecommandlockouts                              

\title{\LARGE \bf
	Resilient Multi-Dimensional Consensus and Distributed Optimization against Agent-Based and Denial-of-Service Attacks}
\author{Hongjian Chen, Changyun Wen, \textsl{Fellow, IEEE}, Xiaolei Li%
    \thanks{H. Chen and C. Wen are with the School of Electrical and Electronic Engineering, 
    Nanyang Technological University, Singapore 
    ({\tt\small HONGJIAN001@e.ntu.edu.sg, ecywen@ntu.edu.sg}).}%
    \thanks{X. Li is with the School of Electrical Engineering, 
    Yanshan University, Qinhuangdao, China ({\tt\small xiaolei@ysu.edu.cn}).}%
    \thanks{This manuscript is not published in any copyrighted journal, even though it was originally submitted to the \textit{IEEE Transactions on Systems, Man, and Cybernetics} on 05~Apr~2024, and with a decision of \textit{reject with major revision and encouragement to resubmit} received on 11~Sep~2024.}%
}


\begin{document}	
\maketitle
\thispagestyle{empty}
\pagestyle{plain}

\begin{abstract}
In this paper, we consider the resilient multi-dimensional consensus and distributed optimization problems of multi-agent systems (MASs) in the presence of both agent-based and denial-of-service (DoS) attacks. The considered agent-based attacks can cover malicious, Byzantine, and stubborn agents. The links between agents in the network can be blocked by DoS attacks, which may lead the digraph to time-varying and isolated. The objective is to ensure the remaining benign agents achieve consensus. To this end, an ``auxiliary point” based resilient control algorithm is proposed for MASs. Under the proposed algorithm, each healthy agent constructs a ``safe kernel" utilizing the states of its in-neighbors and updates its state toward a specific point within this kernel at each iteration. If an agent cannot receive its neighbors' states owing to DoS attacks, it will use the states received immediately before DoS attacks. Moreover, a resilient multi-dimensional distributed optimization (RMDO) algorithm is also proposed. Theoretical proofs and numerical examples are presented to demonstrate the effectiveness of the proposed algorithms.
\end{abstract}

\begin{IEEEkeywords}
	Resilient consensus, Resilient distributed optimization, Multi-dimensional systems and functions, Robust network, Agent-based attacks, DoS attacks. 
\end{IEEEkeywords}


\section{Introduction}

During recent decades, substantial research attention has been focused on the consensus and distributed optimization of MASs, the objective of which is to design distributed algorithms enabling agents to reach consensus \cite{ren2007information}, optimum \cite{ wang2023adaptive, hai2023novel} or Nash equilibrium seeking \cite{ye2023distributed} cooperatively. Many collaboration control schemes have been developed assuming that every agent is trustworthy and follows predefined algorithms. However, as scale and complexity of the network have increased, distributed control algorithms have become more prone to attacks. On the one hand, adversarial agents including malicious, Byzantine, or stubborn ones can drive the normal agents' states outside the desired region \cite{mo2011cyber}. On the other hand, attacks launched at the communication links, such as DoS attacks, can prevent information transmission over the network \cite{he2021secure}.

Considering the resilient consensus problems of MASs, some researchers have adopted the idea of detecting and isolating abnormal agents. Along this direction, observer-based techniques are used to develop a model-based detection system \cite{zhou2021resilient, silvestre2017stochastic}. Moreover, considering the computational cost, some resilient consensus algorithms within polynomial time are proposed in \cite{ramos2023discrete}. To enlarge the algebraic connectivity of the communication network without physically changing the topology, \cite{zhao2017resilient} and \cite{yuan2021secure} utilize two-hop broadcasting schemes to achieve the resilient scalar consensus, where each agent serves as a local monitor through two-hop information. \cite{luo2023secure} extends the above results to multi-dimensional systems with dynamics constraints and malicious attacks.

Another category of resilient control of MASs is designing distributed algorithms to ensure acceptable performance in the presence of false data. The earliest works propose distributed protocols by ignoring neighbors' most extreme values in a simple network (the complete network) \cite{dolev1986reaching}. Based on this idea, \cite{kieckhafer1994reaching} and \cite{vaidya2012iterative} develop a class of algorithms called the Mean Subsequence Reduced (MSR), where each benign agent discards the $F$ smallest and $F$ largest values and updates its state within a limited range with fixed lower and upper bounds. Instead of using the complete graph,  \cite{zhang2012robustness} and \cite{leblanc2013resilient} introduce a more general topology called the robustness network which provides redundancy information for each agent. With the help of this novel network, \cite{leblanc2013resilient} modifies MSR into Weighted Mean Subsequence Reduced (W-MSR). Unlike fixed bounds determined by MSR, W-MSR exclusively eliminates values that are strictly less than or greater than the healthy agents' states. This mechanism keeps more useful information than MSR and offers a graph-based analysis of the resiliency of MASs. \cite{wu2017secure} deploys the W-MSR under communication delays and \cite{dibaji2017resilient} extends the proposed algorithm into the asynchronous setting and second-order systems.

The class of algorithms based on MSR focuses on the resilient scalar consensus of MASs. Considering the resilient consensus in multi-dimensional space, Leblanc et al. \cite{leblanc2017resilient} implement the norm-based W-MSR to each coordinate of state vectors separately, driving the consensus of benign agents towards a multi-dimensional ``box”, the boundaries of which are confined by the minimum and maximum values of benign agents' states. Nevertheless, this ``box” typically encompasses a larger area than the desired region formed by the healthy agents' initial states. To tackle the problems mentioned above, \cite{xiang2016brief} and \cite{yan2020safe} develop a family of resilient multi-dimensional consensus algorithms via the Tverberg partitioning method, which partitions a sufficient number of points into subsets where the intersection of the convex hull of these subsets is non-empty. Take \cite{yan2020safe} as an example, each benign agent computes a ``safe kernel" by applying the Tverberg partitioning method and updates its state to any point in this kernel. The updating mechanism ensures that the final agreement will converge to a point in the convex hull formed by benign agents' initial states. The Tverberg partitioning-based algorithm is generalized in some resilient consensus-based problems such as rendezvous of multi-robot systems in \cite{park2017fault}, containment control in \cite{yan2020resilient} and synchronization of networked Lagrangian systems in \cite{chen2023resilient}. Considering a faster computation of Tverberg partitions, a quadratic programming (QP) method is proposed in \cite{wang2018resilient}. 

The resilient control scheme based on the ``auxiliary point" method \cite{yan2022resilient} is proposed to tackle consensus problems of MASs only under agent-based attacks. However, combining attacks launched on the communication links with agent-based attacks is not considered in the existing literature. Link-targeted attacks can break the digraph and prevent benign agents from receiving necessary information for computing the updating states as illustrated in \cite{9895458,9345784}. Consequently, the resilience and robustness of the ``auxiliary point" method can be degraded. Thus it is unclear whether the existing schemes including the algorithm in \cite{yan2022resilient} are still working in the presence of such attacks. In this paper, we will address such an issue with DoS attacks on communication links.

Furthermore, since the fundamental principle of consensus serves in many distributed coordination settings, the resilient consensus algorithm can be easily extended to tackle resilient distributed optimization problems in adversarial environments. Addressing the RMDO issues has the following difficulties: 1) Under a fully distributed setting, adversarial attacks can drive the global optimizer to any arbitrary point by simply tempering an adversary's local cost function according to Theorem 5 of \cite{pirani2023graph}. As a result, characterizing the region containing the true optimizer of the multi-dimensional global function is challenging without additional redundancy requirements, which can render the resilient distributed optimization problem unsolvable; 2) DoS attacks can break the robust network into many isolation parts and prevent benign agents from communicating with each other to achieve the global optimizer. Some recent attempts are now discussed. In \cite{gupta2021byzantine}, an exact convergence based on a norm filter in a complete network is achieved. Stochastic values of trust between agents are utilized in \cite{yemini2022resilience} to guarantee the exact fault tolerance. However, the requirements of the complete network and trust values are quite strong to some extent. Inspired by the above discussions, we consider how to relax such requirements, which is another problem to be solved. 

The main contributions together with the proposed approaches of this paper are summarized as follows:
\begin{itemize}
\item  By employing a policy that utilizes the immediate in-neighboring states before DoS attacks, the algorithm in \cite{yan2022resilient} is modified to mitigate both agent-based and edge-targeted DoS attacks. A rigorous analysis shows that, given a predefined upper bound on the number of adversarial agents and the duration of DoS attacks, our proposed algorithm ensures an agreement in MASs even without other knowledge of attacks.
\item We further extend our proposed resilient consensus algorithm to tackle the above challenges of RMDO under both agent-based and DoS attacks. By restricting the local cost functions of all agents and utilizing the subgradient method, the exact convergence to the global optimizer is achieved asymptotically. 
\end{itemize}

The subsequent sections of this paper are organized as follows: Section \uppercase\expandafter{\romannumeral2} presents preliminary results essential for understanding this study. Section \uppercase\expandafter{\romannumeral3} explains the research problems in detail. A resilient consensus algorithm and its performance analysis are provided in Section \uppercase\expandafter{\romannumeral4}. The RMDO algorithm and its performance analysis are provided in Section \uppercase\expandafter{\romannumeral5}. Section \uppercase\expandafter{\romannumeral6} presents numerical examples to demonstrate the effectiveness of proposed algorithms and the conclusion of this paper is given in section \uppercase\expandafter{\romannumeral7}.

\section{Preliminaries}
\subsection{Graph theory}
Consider an MAS consisting of $N$ agents cooperatively working on a directed graph $\mathcal{G}=\{\mathcal{V}, \mathcal{E}\}$, where $\mathcal{V}$ represents the set of agents, and $\mathcal{E} \subset \mathcal{V} \times \mathcal{V}$ denotes the set of edges. An edge $e_{i j} \in \mathcal{E}$ indicates that agents $i$ can receive agent $j$' state. Agent $i$'s in-neighbors and out-neighbors sets are $\mathcal{N}_i^{in}=\left\{j \in \mathcal{V} \mid e_{i j} \in \mathcal{E}\right\}$ and $\mathcal{N}_i^{out}=\left\{j \in \mathcal{V} \mid e_{j i} \in \mathcal{E}\right\}$.

To capture the ability of information to enter sets (via the edges) through individual vertices, a formal definition of robustness network is firstly introduced in \cite{leblanc2013resilient} as shown below:
\begin{definition} \label{r robust}
	 \cite{leblanc2013resilient} For any pair of subsets $\mathcal{V}_1, \mathcal{V}_2 \subsetneq \mathcal{V}$ satisfying $\mathcal{V}_1 \cap \mathcal{V}_2 = \emptyset$ and $\mathcal{V}_1, \mathcal{V}_2 \neq \emptyset$, the digraph $\mathcal{G}=\{\mathcal{V}, \mathcal{E}\}$ is classified as \textbf{r-robustness} if either of the following conditions holds:
	\begin{enumerate}[1)]
			\item There exists an agent $i \in \mathcal{V}_1$ has at least $r$ in-neighbors outside $\mathcal{V}_1$;
			\item There exists an agent $i \in \mathcal{V}_2$ has at least $r$ in-neighbors outside $\mathcal{V}_2$.
	\end{enumerate}
\end{definition}
\begin{definition} \label{rs robust}
	\cite{leblanc2013resilient} For any pair of subsets $\mathcal{V}_1, \mathcal{V}_2 \subsetneq \mathcal{V}$ satisfying $\mathcal{V}_1 \cap \mathcal{V}_2 = \emptyset$ and $\mathcal{V}_1, \mathcal{V}_2 \neq \emptyset$, the digraph $\mathcal{G}=\{\mathcal{V}, \mathcal{E}\}$ is classified as \textbf{(r,s)-robustness} if either or both of the following conditions hold:
	\begin{enumerate}[1)]
		\item Any agent $i \in \mathcal{V}_1$ has at least $r$ in-neighbors outside $\mathcal{V}_1$;
		\item Any agent $i \in \mathcal{V}_2$ has at least $r$ in-neighbors outside $\mathcal{V}_2$;
		\item There are no less than $s$ agents in $\mathcal{V}_1 \cup \mathcal{V}_2$, such that each of them has at least r in-neighbors outside $\mathcal{V}_1 \cup \mathcal{V}_2$.
	\end{enumerate}
\end{definition}

\subsection{Objective redundancy}
Even if the MAS is subject solely to agent-based attacks, the exact convergence to the global minimizer is impossible unless additional assumptions on the agents' local cost functions are made \cite{sundaram2018distributed}. Therefore, objective redundancy \cite{zhu2023resilient} is deployed to overcome this difficulty. Let $f_i(x): \mathbb{R}^d \to \mathbb{R}$ be the agent $i$'s multi-dimensional cost function. This paper adopts the following definitions and corollaries in \cite{zhu2023resilient}.

\begin{definition} \label{r-red-def}
    \cite{zhu2023resilient} A network consisting of $N$ agents is said to be $r$-redundant, $r\in \{0,1,...,N-1\}$, if for any subsets $\mathcal{V}_1, \mathcal{V}_2 \subsetneq \mathcal{V}$ with $|\mathcal{V}_1|=|\mathcal{V}_2|=N-r$, it holds
    \begin{align}
        \begin{gathered}
            \arg\min_{x} \sum_{i \in \mathcal{V}_1} f_i(x) = \arg\min_{x} \sum_{i \in \mathcal{V}_2} f_i(x).
        \end{gathered}
    \end{align}
\end{definition}

Let $\mathcal{X}^*$ denote the set of optimal solutions of the global cost function $f(x)=\frac{1}{N} \sum_{i \in \mathcal{V}}f_i(x)$. Based on Definition \ref{r-red-def}, \cite{zhu2023resilient} derived the following lemma.

\begin{lemma} \label{r-redundant}
    \cite{zhu2023resilient} For any subset $\mathcal{V}_1 \subsetneq \mathcal{V}$ with $|\mathcal{V}_1| \textcolor[RGB]{170.00,0.0,0.00}{\geq} N-r$, if an MAS consisting of $N$ agents is $r$-redundant, then it holds that
    \begin{align}
        \begin{gathered}
            \arg\min_{x} \sum_{i \in \mathcal{V}_1} f_i(x) = \mathcal{X}^*.
        \end{gathered}
    \end{align}
\end{lemma}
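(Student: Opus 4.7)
The plan is to leverage Definition 3 to identify a single ``common argmin'' $\mathcal{X}^*_r$ shared by every proper subset of size exactly $N-r$, and then use a combinatorial averaging identity to show that this same set is the argmin for every larger subset as well, including $\mathcal{V}$ itself. The desired equality with $\mathcal{X}^*$ then follows once we apply the identity to $\mathcal{V}_1 = \mathcal{V}$.

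Concretely, I would first fix $\mathcal{X}^*_r := \arg\min_{x} \sum_{i \in \mathcal{S}} f_i(x)$ for any one $\mathcal{S} \subsetneq \mathcal{V}$ with $|\mathcal{S}| = N-r$; by Definition 3 this set is independent of the choice of $\mathcal{S}$. For an arbitrary $\mathcal{V}_1 \subseteq \mathcal{V}$ with $m := |\mathcal{V}_1| \geq N-r$, write $F_{\mathcal{T}}(x) := \sum_{i \in \mathcal{T}} f_i(x)$. A straightforward double-counting argument yields
\begin{equation*}
\binom{m-1}{N-r-1}\, F_{\mathcal{V}_1}(x) \;=\; \sum_{\substack{\mathcal{S} \subseteq \mathcal{V}_1 \\ |\mathcal{S}| = N-r}} F_{\mathcal{S}}(x),
\end{equation*}
since each $i \in \mathcal{V}_1$ lies in exactly $\binom{m-1}{N-r-1}$ size-$(N-r)$ subsets of $\mathcal{V}_1$. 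Because $N-r < N$ (the case $r=0$ being vacuous), every $\mathcal{S}$ on the right is a proper subset of $\mathcal{V}$, so $F_{\mathcal{S}}(x) \geq \min F_{\mathcal{S}}$ with equality exactly when $x \in \mathcal{X}^*_r$. Summing these pointwise inequalities through the identity shows that $F_{\mathcal{V}_1}$ attains its infimum precisely on $\mathcal{X}^*_r$.

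Finally, I would apply the identity once more with $\mathcal{V}_1 = \mathcal{V}$ (and $m = N$), obtaining $\arg\min F_{\mathcal{V}} = \mathcal{X}^*_r$. Since $f(x) = \tfrac{1}{N} F_{\mathcal{V}}(x)$ has the same argmin, we get $\mathcal{X}^* = \mathcal{X}^*_r$, and combining with the previous step delivers the claim.

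The main obstacle, though mild, is the reverse inclusion $\arg\min F_{\mathcal{V}_1} \subseteq \mathcal{X}^*_r$: it relies on the equality-case reasoning that if a sum of terms each bounded below by its own minimum reaches the sum of those minima at some $x$, then each term must individually attain its minimum at $x$. The forward inclusion is immediate from the identity, as any $x^* \in \mathcal{X}^*_r$ simultaneously minimizes every $F_{\mathcal{S}}$. Notably, no convexity, continuity, or smoothness of the $f_i$ is needed — the argument rests purely on the redundancy hypothesis plus elementary combinatorics.
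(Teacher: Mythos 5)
The paper itself gives no proof of this lemma; it is imported verbatim from \cite{zhu2023resilient}, so there is nothing internal to compare against. Your argument is a correct, self-contained derivation, and the key idea --- the double-counting identity $\binom{m-1}{N-r-1}F_{\mathcal{V}_1}(x)=\sum_{\mathcal{S}\subseteq\mathcal{V}_1,\,|\mathcal{S}|=N-r}F_{\mathcal{S}}(x)$ together with the equality-case reasoning for sums of terms each bounded below by its own minimum --- is exactly the right mechanism for lifting the size-$(N-r)$ redundancy of Definition~3 to all larger subsets and to $\mathcal{V}$ itself. Your observations that $r=0$ is vacuous (so every $\mathcal{S}$ in the sum is a proper subset of $\mathcal{V}$, as Definition~3 requires) and that no convexity or continuity is needed are both accurate. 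The one gap worth closing explicitly is attainment: the reverse inclusion $\arg\min F_{\mathcal{V}_1}\subseteq\mathcal{X}^*_r$ needs the common minimizer set $\mathcal{X}^*_r$ to be nonempty, since otherwise $\sum_{\mathcal{S}}\inf F_{\mathcal{S}}$ need not be achieved and the ``each term attains its minimum'' step has no anchor. The paper only guarantees nonemptiness of $\mathcal{X}^*$ later, via Assumption~6, so you should either invoke that assumption (noting that $\mathcal{X}^*\neq\emptyset$ combined with your forward inclusion applied to $\mathcal{V}_1=\mathcal{V}$ forces $\mathcal{X}^*_r\neq\emptyset$ --- which actually requires running the argument in the order $\mathcal{V}$ first, or arguing that any $x_0\in\arg\min F_{\mathcal{V}}$ must minimize each $F_{\mathcal{S}}$ once one knows the infima sum correctly) or state nonemptiness of the subset minimizer sets as a standing hypothesis. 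With that caveat recorded, the proof is complete.
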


\subsection{Sarymsakov matrix}
Another important tool to handle the RMDO issues is the Sarymsakov matrix \cite{xia2014sarymsakov}. The row stochastic matrix $\mathcal{A}=\{a_{ij}\}$ is associated with the digraph $\mathcal{G}$, where $a_{ij}>0$ if and only if $e_{ij} \in \mathcal{E}$ and $\sum_{j=1}^{N} a_{ij} = 1$. Before giving the concept of the Sarymsakov matrix, \textit{one-stage consequent indices}  \cite{seneta1979coefficients} is introduced:
\begin{align*}
        \mathcal{I}_\mathcal{A}(\mathcal{V}_1) = \{ j : a_{ij} > 0 \text{ for some } i \in \mathcal{V}_1 \}.
\end{align*}
$\mathcal{I}_\mathcal{A}(\mathcal{V}_1)$ denotes the set of agents that directly influence those within $\mathcal{V}_1 \subsetneq \mathcal{V}$. Based on one-stage consequent indices, the Sarymsakov matrix is defined as follows:
\begin{definition} \label{Sary}
   For any disjoint nonempty sets $\mathcal{V}_1, \mathcal{V}_2 \subsetneq \mathcal{V}$, if one of the following statements hold:
    \begin{enumerate} [1)]
        \item $\mathcal{I}_\mathcal{A}(\mathcal{V}_1) \bigcap \mathcal{I}_\mathcal{A}(\mathcal{V}_2) \neq \emptyset$;
        \item $\mathcal{I}_\mathcal{A}(\mathcal{V}_1) \bigcap \mathcal{I}_\mathcal{A}(\mathcal{V}_2) = \emptyset$ and \\ $|\mathcal{I}_\mathcal{A}(\mathcal{V}_1) \bigcup \mathcal{I}_\mathcal{A}(\mathcal{V}_2)|>|\mathcal{V}_1 \bigcup \mathcal{V}_2|$,
    \end{enumerate}
    then, a row stochastic matrix $\mathcal{A}$ is classified as Sarymsakov.
\end{definition}
\begin{remark}
    Definition \ref{Sary} indicates that sets $\mathcal{V}_1$ and $\mathcal{V}_2$ either have some influencing nodes in common, or in the absence of common influencing nodes, the number of influencing agents surpass that of the agents being influenced. The definition of Sarymsakov can be easily satisfied in a robustness work according to Definition \ref{r robust} and \ref{rs robust}. 
\end{remark}

\section{Problem formulation}

Suppose there are $N$ agents that cooperate over a digraph $\mathcal{G}=\{\mathcal{V}, \mathcal{E}\}$. Let $x_i (t_k)\in \mathbb{R}^d$ be the state of agent $i$ at instant $t_k$, where $t_{k+1}-t_k=T$ with an arbitrary positive constant $T$ being the sampling period. The system model of benign agents is as follows:
\begin{align}
    \begin{gathered}\label{update}
        x_i(t_{k+1}) = x_i(t_k) + u_i(t_k) + \varepsilon_i(t_k),
    \end{gathered}
\end{align}
where $u_i(t_k)$ is the control input to be designed in the following section and the residual $\varepsilon_i(t_k)$ can be regarded as a disturbance satisfying the following assumption. 

\begin{assumption} \label{residual assumption}
    The residual term in the updating protocol satisfies:
    \begin{align}\label{residual}
                \sum_{t_0}^{\infty}\left\| \varepsilon_i(t_k)\right\|<\infty.
        \end{align}
\end{assumption}

\subsection{Agent-based attacks}
 We characterize the scope of agent-based attacks in this subsection. Let $\mathcal{F}$ and $\mathcal{B}$ be the set of adversarial and benign agents, which satisfying $\mathcal{F} \cap \mathcal{B}=\emptyset  $ and $ \mathcal{F} \cup \mathcal{B}=\mathcal{V}$. Denoting $F$ as the maximum fault tolerance of the digraph, two models of agent-based attacks are outlined as follows:
\begin{enumerate}
	\item \textsl{F}-total attack: The maximum of adversarial agents is $F$, i.e. $|\mathcal{F}| \le F$.
	\item \textsl{F}-local attack: The number of adversarial agents in the in-neighborhood of any benign agent $i$ is no more than $F$, i.e. $|\mathcal{F} \cap \mathcal{N}_i^{in}| \leq F$.
\end{enumerate}

Agent-based attacks can include stubborn, malicious, and Byzantine agents and there's no restriction on information transmitted by adversarial agents. With a maximum tolerance of adversarial agents, an assumption associated with the robustness network is adopted as shown below:
\begin{assumption} \label{No of in neighbors}
	The number of in-neighbors of any agent satisfies $\left|\mathcal{N}_i^{in}\right| \geq(d+1) F+1$, where $d$ denotes the state's dimension.
\end{assumption}
\begin{remark}
    Assumption 2 explicitly describes redundancy information needed for benign agents. In practical scenarios, $|\mathcal{N}_i^{in}|$ can be easily determined by the dimension $d$ and the network's maximum faulty tolerance of $F$.
\end{remark}

\subsection{DoS attacks}
The edge-targeted attacks considered are DoS attacks. Specifically, DoS attack strategies are launched at the communication links independently and the multiple strategies have a joint effect on the entire communication network. Here, we denote the $m$-th DoS attack at the edge $e_{i j}$ as $\mathcal{I}_{m}^{ij} = [ h_{m}^{ij}, h_{m}^{ij} + \tau_{m}^{ij} ]$, where $h_{m}^{ij}$ is the start instant and $\tau_{m}^{ij}$ is the duration of $m$-th attack. For any $t'\in \mathcal{I}_{m}^{ij}$, the information transmission from agent $i$ to agent $j$ will be interrupted. Let $\Pi_{D}^{ij} (t_{1}, t_{2}) \triangleq (t_{1}, t_{2}) \cap \bigcup_{m=1}^{\infty} \mathcal{I}_{m}^{ij}$ be the set where $e_{i j}$ is blocked during $(t_1,t_2)$. The entire DoS strategy is defined as

\begin{align}\label{DoS}
	\begin{gathered}
		\Pi_{D} (t_{1}, t_{2}) \triangleq (t_{1}, t_{2}) \bigcap \bigcup_{e_{ij} \in \mathcal{E}} \Pi_{D}^{ij} (t_{1}, t_{2}),
	\end{gathered}
\end{align}
where $\Pi_{D} (t_{1}, t_{2})$ is the set in which at least one edge is blocked. The commonly accepted assumption for DoS attacks' duration \cite{xu2019distributed,deng2020mas} is listed as follows:
\begin{assumption}\label{DoS-duration}
	(DoS duration) 
	\begin{align}
		\begin{gathered}
			|\Pi_{D} (t_{1}, t_{2})| \leq \mu_d + \frac{t_2 - t_1}{T_{d}},  \forall t_2 > t_1 \geq 0,
		\end{gathered}
	\end{align}
where constants $T_{d}>1$ and $\mu_d>0$.
\end{assumption}

\begin{remark}
	Assumption \ref{DoS-duration} is reasonable because the energy available for multiple adversaries is finite.
\end{remark}

\subsection{Resilient consensus problem}
Considering the influence of agent-based and DoS attacks, an algorithm is to be designed such that the resilient multi-dimensional consensus is achieved in the sense that the following conditions are satisfied:
\begin{enumerate}

	\item \textsl{Consensus}: For any agents $i, j \in \mathcal{B}$, it holds that
	\begin{align}\label{Consensus}
		\begin{gathered}
			\lim _{k \rightarrow \infty}[x_i\left(t_k\right)-x_j\left(t_k\right)]=0.
		\end{gathered}
	\end{align}
    \item \textsl{$\delta$-Validity}: The state of any benign agent remains in the set $\Xi(t_0)+\mathbb{B}_{\delta}$ at each iteration, where $\Xi(t_0)$ denotes the convex hull formed by all benign agents' initial states and $\mathbb{B}_\delta$ is $d$-dimensional ball with radius $\delta$.
\end{enumerate}

\subsection{RMDO problem}
In this paper, the convex but not necessarily differentiable local cost function $f_i(x)$ is only known by agent $i$. With only local objective information, multiple agents cooperatively solve the following RMDO problem:
\begin{enumerate}
    \item \textsl{Consensus}: It is the same as illustrated in (\ref{Consensus}).
    \item \textsl{Optimization}: 
    \begin{align} \label{optimization target}
        \begin{gathered}
            \lim_{k \rightarrow \infty}x_i(t_k) \in \mathcal{X}^*, 
        \end{gathered}
    \end{align}
    for any $i\in \mathcal{B}$. $\mathcal{X}^*$ denotes the set of optimal states satisfying $\mathcal{X}^* = \underset{x}{\arg \min } \sum_{i \in \mathcal{V}} f_i(x).$
\end{enumerate}

\section{A resilient consensus algorithm}

\subsection{Algorithm design}
The resilient consensus algorithm against both agent-based and DoS attacks proposed in this paper is based on the approach ``auxiliary point" in \cite{yan2022resilient}, mainly derived from ``safe kernel" introduced in \cite{yan2020safe}. The calculation process of the ``safe kernel" is discussed in the following definitions.

\begin{definition}\label{subset convex}
	$\mathcal{X}_i(t_k) \subset \mathbb{R}^d$ with cardinality $|\mathcal{N}_i^{in} |$ aggregates all in-neighboring states of agent $i\in \mathcal{B}$ at instant $t_k$. Given some positive integers $F<\left|\mathcal{N}_i^{in}\right|$, $\mathcal{S}(\mathcal{X}_i (t),F)$ denotes the set of all its subsets with cardinality $|\mathcal{N}_i^{in}|-F$.
\end{definition}
The set $\mathcal{S}(\mathcal{X}_i(t_k), F)$ encompasses $\left(\begin{array}{c}\left|\mathcal{N}_i^{in}\right| \\ F\end{array}\right)$ elements, each corresponding to a distinct convex hull. The ``safe kernel" is defined by the intersection of these convex hulls as follows:
\begin{definition}\label{safe kernel}
	\cite{yan2020safe} Given a set $\mathcal{X}_i(t_k) \subset \mathbb{R}^d$ with cardinality $\left|\mathcal{N}_i^{in}\right|$, for some constants $F \in \mathbb{Z}_{\geq 0}$ satisfying  $F<\left|\mathcal{N}_i^{in}\right|$, we define the intersection as the set
	\begin{align}
		\Psi\left(\mathcal{X}_i(t_k), F\right) \triangleq \bigcap_{S \in \mathcal{S}\left(X_i(t), F\right)} \operatorname{Conv}\left(S\right),
	\end{align}
 where $\operatorname{Conv}(S)$ is the convex hull formed by points in each subset $S \subset \mathbb{R}^d$.
\end{definition}


Then, the ``auxiliary point" is selected within the ``safe kernel". 

Now the resilient consensus algorithm is presented as Algorithm 1, where $\mathcal{H}_{\mathcal{X}}$ denotes a multi-dimensional convex hull and $\text{Cen}(\mathcal{X})$ is the centroid of $\mathcal{H}_{\mathcal{X}}$.

\begin{breakablealgorithm}
	\caption{Resilient consensus algorithm}
	\label{al1}
	\begin{algorithmic}[1]
		\STATE Collect all in-neighboring states $x_j(t_k), j \in \mathcal{N}_i^{in}$ in set $\mathcal{X}_i(t_k)$. \\
        \textbf{if} $t_k \in \mathcal{I}_{m}^{ij}$ \textbf{then} \\
        \hspace{\algorithmicindent} $x_j(t_k)=x_j(t_k^{-})$, where $t_k^{-} \leq h_m^{ij}$ is the last updating \\ \hspace{\algorithmicindent} instant before $m$-th DoS attack $\mathcal{I}_{m}^{ij}$. \\
        \textbf{end if}
		\STATE \textbf{for} $p\in \{1, 2, ..., d\}$ \textbf{do} \\
        \hspace{\algorithmicindent} \textbf{\romannumeral 1:} Sort the points of $\mathcal{X}_i(t_k)$ in ascending order based \\
        \hspace{\algorithmicindent} on their $p$-th entries.\\
        \hspace{\algorithmicindent} \textbf{\romannumeral 2:} Collect the first 
         and last $(d+1)F+1$ sorted points \\
        \hspace{\algorithmicindent} in sets $\mathcal{Y}_i(p,t_k)$ and  $\mathcal{Z}_i(p,t_k)$ respectively. Calculate \\ 
        \hspace{\algorithmicindent} any points $y_i(p,k) \in \Psi\left(\mathcal{Y}_i(p,t_k), F\right)$ and  $z_i(p,k) \in$ \\ \hspace{\algorithmicindent} $\Psi\left(\mathcal{Z}_i(p,t_k), F\right)$.\\
        \textbf{end for}
		\STATE Collect $y_i(p,k), 1\leq p \leq d$ and $z_i(p,k), 1\leq p \leq d$ in a set $\Lambda_i(t_k)$. The ``auxiliary point" of agent $i$ is chosen as the center of the convex hull $\mathcal{H}_{\Lambda_i(t_k)}$ formed by the set $\Lambda_i(t_k)$
        \begin{align}
                \tilde{x}_i(t_k)=\text{Cen}(\Lambda_i(t_k)).
        \end{align}
	\STATE The control input $u_i(t_k)$ is designed as follows
            \begin{align} \label{control input}
                 u_i(t_k) = (1-\alpha_i)(\tilde{x}_i(t_k)-x_i(t_k)),
            \end{align}
  where the weight is chosen as $\alpha_i < c$, $1-\alpha_i < c$, and constant $0.5 < c <1$.
		\STATE Transmit the updated state $x_i(t_{k+1})$ to out-neighbors $j\in \mathcal{N}_i^{out}$.
	\end{algorithmic}
\end{breakablealgorithm}

\begin{remark}
    Normally, the in-neighboring states of benign agents are treated as $0 \in \mathbb{R}^d$ if DoS attacks block edges such as in \cite{an2018decentralized}, as benign agents receive no information in such cases. This may result in the maximum fault tolerance $F$ of the ``auxiliary point" method being exceeded. Now this is not the case in our proposed algorithm, as seen from the policy in Step 1 which is inspired by the idea behind switching control. This means that if the edge $e_{ij}$ is blocked by $m$-th DoS attack $\mathcal{I}_{m}^{ij}$, agent $i$ will use the previous information of in-neighbor $j$. This policy allows the ``auxiliary point" approach to eliminate the influence of DoS attacks, which is proved rigorously in the next subsection. 
    By applying (\ref{control input}) into the updating protocol (\ref{update}), we have the updating protocol:
    \begin{align}\label{updating-complete}
        x_i(t_{k+1}) = \alpha_i x_i(t_k) + (1-\alpha_i)\tilde{x}_i(t_k) + \varepsilon_i(t_k).
    \end{align}
    Moreover, The designed weight $\alpha_i$ is used to ensure the sufficient effect of ``auxiliary point" $\tilde{x}_i(t_k)$ according to \cite{yan2022resilient} and a QP-based method in \cite{wang2018resilient} can be employed to reduce the computation cost for calculating $y_i(p,k)$ and $z_i(p,k)$.
\end{remark}
\begin{remark}
The weight $\alpha_i$ can be designed as either time-invariant or time-varying, based on practical considerations. 
As established in Lemma 3, the proposed algorithm ensures asymptotic consensus under both settings, provided that $\alpha_i$ satisfies the design guideline. 
It is worth noting that the convergence rate is influenced by the constant $c$ as discussed in Remark 8.
\end{remark}

\subsection{Performance analysis}
The reliability of Algorithm 1 is guaranteed by considering the existence of $y_i\left(p,t_k\right)$ and $z_i\left(p,t_k\right)$ for any dimension $p\in \{1,\ 2, \ ...,\ d\}$. 

Consider the number of $\mathcal{Y}_i(p,t_k)$'s elements, with the application of Corollary 1 in \cite{yan2022resilient} induced from \textbf{Helly's Theorem}, we can derive that the set $\mathcal{Y}_i(p,t_k)$ is nonempty. Thus, the point $y_i(p,t_k)$ must exist. Similarly, we can prove the existence of $z_i(p,t_k)$.

The following Lemma considers the $\delta$-validity condition by applying Algorithm 1 under both agent-based and DoS attacks. 
\begin{lemma}\label{vanish}
	Consider an MAS cooperating over the \textbf{(d + 1)F + 1}-robust [\textbf{(dF + 1, F + 1)}-robust] digraph subject to DoS attacks. If the agent-based attacks conform to the F-local [F-total \textbf{resp.}] attack model and Assumptions 1-2 hold, Algorithm 1 ensures the fulfillment of the $\delta$-validity condition.
\end{lemma}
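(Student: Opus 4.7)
The plan is to establish $\delta$-validity by induction on the iteration index $k$, with inductive hypothesis that every benign state $x_i(t_k)$ lies in $\Xi(t_0) + \mathbb{B}_{\delta_k}$, where $\delta_k \triangleq \max_{i\in\mathcal{B}} \sum_{s=0}^{k-1}\|\varepsilon_i(t_s)\|$. By Assumption 1, the sequence $\{\delta_k\}$ is bounded, so taking $\delta \triangleq \sup_k \delta_k < \infty$ yields the required radius. The base case $k=0$ is immediate since $x_i(t_0)\in\Xi(t_0)$ by definition.

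The key step, which parallels the argument for the DoS-free case in \cite{yan2022resilient}, is to show that the auxiliary point $\tilde{x}_i(t_k)$ constructed in Step 3 of Algorithm 1 lies in the convex hull of the benign in-neighbors' states actually used in the computation. I would proceed dimension by dimension: for each coordinate $p$, the set $\mathcal{Y}_i(p,t_k)$ contains $(d+1)F+1$ points, of which (under either the $F$-local or $F$-total model combined with Assumption 2) at most $F$ are from adversarial neighbors. Removing any $F$ elements leaves a subset of size $dF+1$; the subset in which exactly all $F$ adversarial values are removed consists entirely of benign points, hence $\Psi(\mathcal{Y}_i(p,t_k),F)$ is contained in the convex hull of the benign values in $\mathcal{Y}_i(p,t_k)$. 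The symmetric argument gives the analogous statement for $z_i(p,t_k)$. Since $\tilde{x}_i(t_k)$ is the centroid of $\Lambda_i(t_k)$, a convex combination of such points, it too lies in the convex hull of benign in-neighbor states used at iteration $k$.

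The novelty relative to \cite{yan2022resilient} is handling the substitution performed in Step 1: when edge $e_{ij}$ is under DoS at $t_k$, agent $i$ uses the cached $x_j(t_k^-)$ in place of the unreceived $x_j(t_k)$. The point is that, by the inductive hypothesis applied at time $t_k^- < t_k$, any cached value coming from a benign neighbor $j$ already satisfies $x_j(t_k^-) \in \Xi(t_0)+\mathbb{B}_{\delta_{k^-}} \subset \Xi(t_0)+\mathbb{B}_{\delta_k}$. Thus the ``benign in-neighbor states used at iteration $k$'' --- whether fresh or cached --- all sit in $\Xi(t_0)+\mathbb{B}_{\delta_k}$, and so does $\tilde{x}_i(t_k)$ by the previous paragraph. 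Combining with the convexity of the update rule \eqref{updating-complete},
\begin{equation*}
 x_i(t_{k+1}) = \alpha_i x_i(t_k) + (1-\alpha_i)\tilde{x}_i(t_k) + \varepsilon_i(t_k),
\end{equation*}
and the inductive hypothesis on $x_i(t_k)$, gives $x_i(t_{k+1})\in \Xi(t_0)+\mathbb{B}_{\delta_k}+\mathbb{B}_{\|\varepsilon_i(t_k)\|}\subset \Xi(t_0)+\mathbb{B}_{\delta_{k+1}}$, closing the induction.

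The main obstacle I expect is bookkeeping the DoS-induced caching cleanly: different in-neighbors of the same agent may be blocked at different instants, so the argument must tolerate a mixture of current and multiple past timestamps in one snapshot $\mathcal{X}_i(t_k)$. The clean way to absorb this is to maintain the invariant at the level of the uniform enclosing set $\Xi(t_0)+\mathbb{B}_{\delta_k}$ rather than tracking per-agent timestamps, which is why the induction is phrased in terms of the monotone nondecreasing radius $\delta_k$ rather than any instantaneous quantity.
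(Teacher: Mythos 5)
Your proof is correct in substance and rests on the same core fact as the paper's: since at most $F$ of the $(d+1)F+1$ points in $\mathcal{Y}_i(p,t_k)$ (resp.\ $\mathcal{Z}_i(p,t_k)$) are adversarial, the all-benign subset of size $dF+1$ witnesses that $\Psi(\cdot,F)$, and hence the auxiliary point, lies in the convex hull of benign in-neighbor values; summability of the residuals then bounds the drift. Where you differ is the bookkeeping. The paper tracks the nested sequence of hulls $\Xi(t_k)$ and proves $\Xi(t_{k+1})\subseteq\Xi(t_k)+\mathbb{B}_{\delta(t_k)}$, handling DoS only through the extreme case in which \emph{all} in-edges of an agent are blocked (so the agent simply does not move); the intermediate case where a single snapshot $\mathcal{X}_i(t_k)$ mixes fresh states with cached states from several earlier instants is not spelled out there, and a cached benign state $x_j(t_k^-)\in\Xi(t_k^-)$ need not lie in the later hull $\Xi(t_k)$. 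Your choice of a single monotone enclosing set $\Xi(t_0)+\mathbb{B}_{\delta_k}$ absorbs arbitrary mixtures of timestamps uniformly, which is arguably the cleaner way to close that gap. One small repair is needed in your induction: with $\delta_k\triangleq\max_{i}\sum_{s<k}\|\varepsilon_i(t_s)\|$ the step $\delta_k+\|\varepsilon_i(t_k)\|\le\delta_{k+1}$ can fail when the maximum is attained by a different agent at different steps; define instead $\delta_k\triangleq\sum_{s=0}^{k-1}\max_{i\in\mathcal{B}}\|\varepsilon_i(t_s)\|$ (a sum of per-step maxima, exactly the quantity the paper bounds in its inequality \eqref{residual max}), which is still finite under Assumption~\ref{residual assumption} and makes the inductive step go through uniformly in $i$.
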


\begin{proof}
    For simplicity, we denote the ``safe kernel" $\Psi\left(\mathcal{X}_i(t_k), F\right)$ as $\mathcal{S}_i(t_k)$. Also, we define $\delta(t_k)$ as
\begin{align}\label{delta}
    \begin{gathered}
        \delta(t_k) \triangleq \max_{i \in \mathcal{B}}\|\varepsilon_i(t_k)\|.
    \end{gathered}
\end{align}
If the MAS is only under $F$-total or $F$-local attacks, for any agent $i\in \mathcal{B}$, it has at least $|\mathcal{X}_i(t_k)|-F$ benign in-neighbors. From Definitions \ref{subset convex} and \ref{safe kernel} and Assumption \ref{No of in neighbors}, we can easily find that $\mathcal{S}_i(t_k)$ lies in the convex hull formed by any $|\mathcal{X}_i(t_k)|-F$ in-neighboring points. As a result, we have $\mathcal{S}_i(t_k) \in \Xi(t_k)$. Given that $\mathcal{Y}_i(p,t_k)$ is a subset of the $\mathcal{X}_i(t_k)$, it directly follows from Lemma 9 in \cite{yan2022resilient} that $\Psi(\mathcal{Y}_i(p,t_k)) \subseteq \mathcal{S}_i(t_k)$. Consequently, this implies $y_i(p, t_k) \in \mathcal{S}_i(t_k)$, and similarly, $z_i(p,t_k) \in \mathcal{S}_i(t_k)$. This gives that 
\begin{align} \label{safe kernel in Xi(t_k)}
    \Lambda_i(t_k) \subset \Xi(t_k),
\end{align}
where $\Xi(t_k)$ represents the convex hull formed by states of all benign agents at $t_k$.
 Since the \textcolor[RGB]{170.00,0.0,0.00}{``}auxiliary point\textcolor[RGB]{170.00,0.0,0.00}{"} $\tilde{x}_i(t_k)$ is the center of $\Lambda_i(t_k)$, we have $\tilde{x}_i(t_k)\in \Xi(t_k)$. Combined with $x_i(t_k) \in \Xi(t_k)$ and the updating protocol (\ref{updating-complete}), it follows that $x_i(t_{k+1})$ also belongs to $\Xi(t_k)$ in the absence of disturbances and DoS attacks. Furthermore, it is established that $x_i(t_{k+1})$ constitutes a vertex of the set $\Xi(t_{k+1})$. Consequently, this implies that $\Xi(t_{k+1})\subset \Xi(t_k)$, indicating the vanishing property of $\Xi(t_k)$ over time.

    Then, we consider the effect of the disturbance, namely, residual terms. For any benign agent $i$, we have
    \begin{align}
        \begin{gathered}
            x_i(t_{k+1}) \in \{x: \|x-\chi\| \leq \delta(t_k), \exists \chi \in \Xi(t_k)\},
        \end{gathered}
    \end{align}
    where $\delta(t_k)$ is defined as (\ref{delta}). So far, we have that if only agent-based attacks exist, the following relation hold
    \begin{align} \label{Xi subset}
        \begin{gathered}
            \Xi\left(t_{k+1}\right) \subset \Xi\left(t_k\right)+\mathbb{B}_{\delta(t_k)}.
        \end{gathered}
    \end{align}
    
    If the network is under DoS attacks at instant $t_k$, we consider a benign agent under the worst case, that is all $e_{ij}, i\in \mathcal{B}$ and $j\in \mathcal{N}_i^{in}$ are blocked by $m$-th DoS attack. The following relation of the in-neighboring set will hold
    \begin{align} \label{DoS_in_neighbor}
        \begin{gathered}
            \mathcal{X}_i(t_k) = \mathcal{X}_i(t_k^-) = \mathcal{X}_i(t_k+\tau_m),
        \end{gathered}
    \end{align}
    where $t_k^-$ denotes the last updating instant before $m$-th DoS attack $\mathcal{I}_{m}^{ij}$ and we define $\tau_m \triangleq \tau_m^{ij}.$ According to (\ref{DoS_in_neighbor}), the benign agent $i$ under the worst case does not update its state during $m$-th DoS attack. Further, considering the worst case for the whole network, all benign agents are under DoS attacks, with the policy given in Step 1 we have
    \begin{align}
        \begin{gathered}
            \Xi\left(t_{k+1}\right) = \Xi\left(t_k\right)+\mathbb{B}_{\delta(t_k)}.
        \end{gathered}
    \end{align}
    Combined with (\ref{Xi subset}), we have the vanishing property of $\Xi(t_k)$, namely
    \begin{align}\label{Xi_DoS-Ad}
		\Xi\left(t_{k+1}\right) \subseteq \Xi\left(t_k\right)+\mathbb{B}_{\delta(t_k)}.
	\end{align}
    From the definition of $\delta(t_k)$ and Assumption \ref{residual assumption}, the following relation holds
    \begin{align} \label{residual max}
        \begin{gathered}
            \sum_{t_0}^{\infty} \delta(t_k)<\sum_{i \in \mathcal{B}}\left(\sum_{t_0}^{\infty}\left\|\varepsilon_i(t_k)\right\|\right)<\infty.
        \end{gathered}
    \end{align}
    Combining (\ref{Xi_DoS-Ad}) and (\ref{residual max}), we conclude that $x_i(t_k)\in \Xi(t_0)+\mathbb{B}_{\delta}$ holds for any $t_k$.
\end{proof}

The subsequent lemma discusses the agreement condition (\ref{Consensus}) with the application of Algorithm 1.

\begin{lemma} \label{lemma agreement}
	Consider an MAS cooperating over the \textbf{(d + 1)F + 1}-robust [\textbf{(dF + 1, F + 1)}-robust] digraph subject to DoS attacks and suppose the agent-based attacks conform to the F-local [F-total \textbf{resp.}] attack model. With Assumptions 1-3, Algorithm 1 ensures that the consensus condition (\ref{Consensus}) is achieved asymptotically.
\end{lemma}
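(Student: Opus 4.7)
The plan is to adapt the coordinate-wise contraction analysis from the W-MSR/``auxiliary point'' literature to the multi-dimensional update of Algorithm 1, and then to absorb DoS attacks using the memory policy in Step 1 together with Assumption 3. For each coordinate $p\in\{1,\dots,d\}$ I set $M_p(t_k)\triangleq\max_{i\in\mathcal{B}}[x_i(t_k)]_p$, $m_p(t_k)\triangleq\min_{i\in\mathcal{B}}[x_i(t_k)]_p$ and $D_p(t_k)\triangleq M_p(t_k)-m_p(t_k)$. By the same Helly-type argument that gives $\Lambda_i(t_k)\subset\Xi(t_k)$ in Lemma 1, under Assumption 2 and the $F$-local [$F$-total] attack model every point of $\Psi(\mathcal{Y}_i(p,t_k),F)$ and $\Psi(\mathcal{Z}_i(p,t_k),F)$ lies in the convex hull of the benign in-neighbour states, so $[\tilde{x}_i(t_k)]_p\in[m_p(t_k),M_p(t_k)]$. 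Substituting into (\ref{updating-complete}) yields the coordinate-wise ``no-expansion'' bounds
\[
M_p(t_{k+1})\leq M_p(t_k)+\delta(t_k),\qquad m_p(t_{k+1})\geq m_p(t_k)-\delta(t_k),
\]
with $\delta(t_k)\triangleq\max_{i\in\mathcal{B}}\|\varepsilon_i(t_k)\|$, so $D_p$ is non-increasing up to the residual.

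To obtain strict contraction on a ``round'' of DoS-free iterations, I would exploit robustness. Splitting $\mathcal{B}$ along a threshold $\xi\in(m_p(t_k),M_p(t_k))$, the $(d+1)F+1$-robust [$(dF+1,F+1)$-robust] property of $\mathcal{G}$ provides a boundary benign agent $i^{\star}$ having $(d+1)F+1$ in-neighbours outside its own set, of which at most $F$ are adversarial and hence at least $dF+1$ are benign with $p$-coordinate on the opposite side of $\xi$; by the Helly argument these benign contributions necessarily enter $\Lambda_{i^{\star}}(t_k)$, and combined with $1-\alpha_{i^{\star}}>1-c>0$ they force $[x_{i^{\star}}(t_{k+1})]_p$ to be pulled strictly into the interior by a gap proportional to $D_p(t_k)$. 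Iterating over $T_r$ DoS-free steps bounded by the diameter of $\mathcal{G}$ produces a uniform contraction $D_p(t_{k+T_r})\leq \rho\,D_p(t_k)+K\sum_{s=k}^{k+T_r-1}\delta(t_s)$ with some $\rho<1$.

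To incorporate DoS, I observe that when an in-neighbour $j$ of $i$ is blocked, Step 1 of Algorithm 1 substitutes the stored $x_j(t_k^{-})$, which by Lemma 1 still lies inside $\Xi(t_0)+\mathbb{B}_{\delta}$; the coordinate-wise no-expansion bounds therefore continue to hold during DoS-blocked iterations. Assumption 3, with $T_d>1$, limits the cumulative DoS-blocked time to $\mu_d+(t_k-t_0)/T_d$, so a positive asymptotic fraction $1-1/T_d$ of iterations is DoS-free; arranging these into infinitely many windows on which the strict-contraction argument of the previous paragraph applies (absorbing any remaining stale-snapshot effect as a bounded delay, since every edge is refreshed in finite time), chaining the contractions geometrically, and using $\sum_k\delta(t_k)<\infty$ from Assumption 1, I obtain $\lim_{k\to\infty}D_p(t_k)=0$ for every $p$, which is exactly (\ref{Consensus}).

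The main obstacle I expect is handling DoS inside the round argument: DoS attacks desynchronize agents' in-neighbour snapshots and can in principle shatter ``clean'' round windows of length $T_r$. The resolution is that the robustness property is structural on $\mathcal{G}$ and is independent of DoS, the memory mechanism in Step 1 converts DoS into a bounded-delay consensus dynamics rather than a disconnection, and Assumption 3 with $T_d>1$ together with the summability of $\delta(t_k)$ guarantees enough DoS-free updates for the standard delayed/asynchronous consensus tools to close the argument.
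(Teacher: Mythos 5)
Your proposal follows essentially the same route as the paper's proof: coordinate-wise no-expansion bounds inherited from Lemma~\ref{vanish}, a robustness-driven strict contraction obtained by tracking boundary agents of the two extreme sets over a window containing enough effective (DoS-free) updates, the Step-1 memory policy converting DoS into bounded delays under Assumption~\ref{DoS-duration}, and summability of the residuals to close the argument (the paper invokes Lemma~7 of \cite{nedic2010constrained} where you chain the contractions geometrically, which is the same mechanism). The only difference is organizational --- the paper makes the contraction explicit via the shrinking cardinalities of $\mathcal{B}_M^1$ and $\mathcal{B}_m^1$ over at most $|\mathcal{B}|$ effective steps, yielding the factor $(1-(0.5c)^{\eta})$, whereas you posit an abstract round length $T_r$ and factor $\rho<1$ --- so no substantive gap.
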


\begin{proof}
The minimum and maximum values along the dimension $p\in \{1,\ 2,\ ...,\ d \}$ are respectively denoted as follows
\begin{align*}
\begin{gathered}
    	m_{\mathcal{B}}^p\left(t_k\right)  \triangleq \min _{i \in \mathcal{B}} x_i^p\left(t_k\right),
	 \\ M_{\mathcal{B}}^p\left(t_k\right)  \triangleq \max _{i \in \mathcal{B}} x_i^p\left(t_k\right).
\end{gathered}
\end{align*}
The difference between them is defined as
\begin{align*}
	\Delta^p(t_k) \triangleq  M_{\mathcal{B}}^p\left(t_k\right)-m_{\mathcal{B}}^p\left(t_k\right).
\end{align*}
The relation (\ref{Xi_DoS-Ad}) implies the following inequalities:
\begin{align*}
   \begin{gathered}
       M_{\mathcal{B}}^p\left(t_{k+1}\right)  \leq  M_{\mathcal{B}}^p\left(t_k\right) + \delta(t_k),\\
        m_{\mathcal{B}}^p\left(t_{k+1}\right)  \geq   m_{\mathcal{B}}^p\left(t_k\right) - \delta(t_k).
   \end{gathered}
\end{align*}
Construct two sets as follows
\begin{align*}
    \begin{gathered}
        \mathcal{B}^p_M(t_k, \epsilon) \triangleq \{i \in \mathcal{B}: \ x_i^p(t_k)>\epsilon\},\\
        \mathcal{B}^p_m(t_k, \epsilon) \triangleq \{i \in \mathcal{B}: \ x_i^p(t_k)<\epsilon\}.
    \end{gathered}
\end{align*}

Given the symmetry across all dimensions, the subsequent analysis of the agreement condition will be simplified by examining only the first component. Suppose at some initial instants, it holds that $M_{\mathcal{B}}^p\left(t_k\right) \neq m_{\mathcal{B}}^p\left(t_k\right)$. Define $\epsilon_0 = \Delta^1(t_k)/2$, we will have two disjoint and nonempty sets $\mathcal{B}^p_M(t_k, M_{\mathcal{B}}^p\left(t_k\right)-\epsilon_0)$ and $\mathcal{B}^p_m(t_k, m_{\mathcal{B}}^p\left(t_k\right)+\epsilon_0)$. And the above two sets satisfy (\textit{4a}) and (\textit{4b}) topology conditions of Theorem 1 in \cite{yan2022resilient} with Assumption \ref{No of in neighbors}. For agent $j \in \mathcal{B}_M^1(t_k, M_{\mathcal{B}}^1\left(t_k\right)-\epsilon_0)$ we can conclude that the upper bound of the first entry of $\Lambda_j(t_k)$ is $M_{\mathcal{B}}^1\left(t_k\right)-\epsilon_0$. For simplicity, we rewrite $M_{\mathcal{B}}^1\left(t_k\right)$ and $m_{\mathcal{B}}^1\left(t_k\right)$ as $M_{\mathcal{B}}^1$ and $m_{\mathcal{B}}^1$ respectively. Then we have the first entry of agent $j$'s ``auxiliary point" satisfies
\begin{equation*}
    \begin{split}
        \tilde{x}^1_j(t_k)  &=\text{Cen}(\Lambda_j(t_k))\\
                          &=0.5 m_1\left(\Lambda_j(t_k)\right)+0.5 M_1\left(\Lambda_j(t_k)\right) \\  
                          &\leqslant 0.5\left(M^1_B-\epsilon_0\right)+0.5 M^1_B \\  
                          &=M^1_B-0.5 \epsilon_0.
    \end{split}
\end{equation*}

Then, considering the worst effect of DoS attacks on agent $j$, the first entry of the updated state fulfills the following condition
\begin{equation} \label{upperbound_j_k+1}
    \begin{split}
         x^1_j(t_{k+1+\tau_m}) & =\alpha_j x^1_j(t_k)+\left(1-\alpha_j\right) \tilde{x}^1_j(t_k)+\varepsilon^1_j(t_k) \\ 
                        & \leqslant \alpha_j M^1_\mathcal{B}+\left(1-\alpha_j\right)\left(M^1_\mathcal{B}-0.5 \epsilon_0\right) +\delta(t_k) \\ 
                        & =M^1_\mathcal{B}-0.5\left(1-\alpha_j\right) \epsilon_0+\delta(t_k) \\ 
                        & \leqslant M^1_\mathcal{B}-0.5 c \epsilon_0+\delta(t_k).
    \end{split}
\end{equation}

Consider a benign agent $i$ not satisfying $x_i^1(t_k) < M^1_\mathcal{B}(t_k) - \epsilon_0$, that is, $i\in \mathcal{B} \setminus \mathcal{B}_M^1(t_k, M_{\mathcal{B}}^1\left(t_k\right)-\epsilon_0)$. We also assume the worst case of DoS attacks where all in-neighboring edges are blocked by DoS attacks. The following relation holds
\begin{equation} \label{upperbound_i_k+1}
    \begin{split}
        x_i^1(t_{k+1+\tau_m}) & =\alpha_i x_i^1(t_k) + (1-\alpha_i) \tilde{x}^1_i(t_k)+\varepsilon^1_i(t_k) \\
                        & \leqslant \alpha_i (M^1_\mathcal{B}-\epsilon_0)+\left(1-\alpha_j\right)M^1_\mathcal{B} + \delta(t_k) \\ 
                        & \leqslant M^1_\mathcal{B}-c \epsilon_0+\delta(t_k).
    \end{split}
\end{equation}

Obviously, the upper bound (\ref{upperbound_j_k+1}) can be also applied to (\ref{upperbound_i_k+1}). Similarly, for $j \in \mathcal{B}_m^1(t_k, m_{\mathcal{B}}^1-\epsilon_0)$, we can induce that $ x^1_j(t_{k+1}++\tau_m)\geqslant m^1_\mathcal{B}+0.5 c \epsilon_0-\delta(t_k),$ and this lower bound can be applied to $i\in \mathcal{B} \setminus \mathcal{B}_m^1(t_k, m_{\mathcal{B}}^1+\epsilon_0)$. Define $\epsilon_1 \triangleq 0.5c\epsilon_0 - \delta(t_k)$, where $\epsilon_1<\epsilon_0$ holds. The previous analysis on the worst case of DoS attacks indicates the following situations:
\begin{itemize}
    \item At least one agent in $\mathcal{B}_M^1(t_k, M_{\mathcal{B}}^1-\epsilon_0)$ has its first entry's upper bound decreasing from $M_{\mathcal{B}}^1$ to below or equal to $M_{\mathcal{B}}^1-\epsilon_0$, which leads to $\mathcal{B}_M^1(t_{k+1}+\tau_m, M_{\mathcal{B}}^1-\epsilon_0) \subseteq \mathcal{B}_M^1(t_k, M_{\mathcal{B}}^1-\epsilon_0)$.
    \item At least one agent in $\mathcal{B}_m^1(t_k, m_{\mathcal{B}}^1+\epsilon_0)$ has its first entry's lower bound increasing from $m_{\mathcal{B}}^1$ to above or equal to $m_{\mathcal{B}}^1-\epsilon_0$, which leads to $\mathcal{B}_m^1(t_{k+1}+\tau_m, m_{\mathcal{B}}^1+\epsilon_0) \subseteq \mathcal{B}_m^1(t_k, m_{\mathcal{B}}^1+\epsilon_0)$.
\end{itemize}
For simplicity, we denote $M_{\mathcal{B}}^1-\epsilon_0$ as $\sigma_M$ and $m_{\mathcal{B}}^1+\epsilon_0$ as $\sigma_m$. From the former discussion, it holds that
\begin{equation} \label{no of B vanish}
    \begin{split}
        & \left|\mathcal{B}_M^1\left(t_{k+1}+\tau_m, \sigma_M\right)\right|+\left|\mathcal{B}_m^1\left(t_{k+1}+\tau_m, \sigma_m\right)\right| \\ & \leq \left|\mathcal{B}_M^1\left(t_k, \sigma_M\right)\right|+\left|\mathcal{B}_m^1\left(t_k, \sigma_m\right)\right|,
    \end{split}
\end{equation}
which is also illustrated in Fig. \ref{explanation of the former discussion}.
\begin{figure}[h]
	\centering
	\includegraphics[scale=0.3]{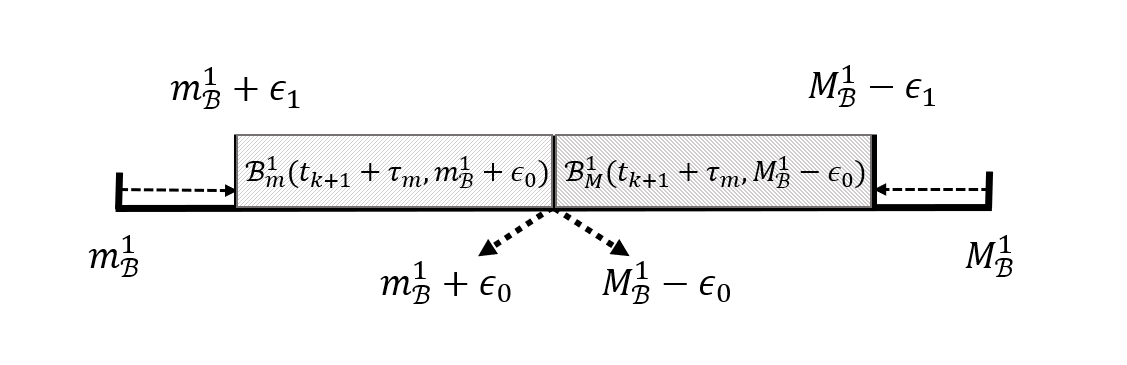}
	\caption{An explanation of the former discussion and  (\ref{no of B vanish}).}
	\label{explanation of the former discussion}
\end{figure}

The sets $\mathcal{B}_M^1\left(t_{k+1}+\tau_m, \sigma_M\right)$ and $\mathcal{B}_m^1\left(t_{k+1}+\tau_m, \sigma_m\right)$ are disjoint and it is assumed that both sets are nonempty. Based on it, we further consider the next updating step after $m$-th DoS attacks. Similarly, we can find a benign agent $j \in \mathcal{B}_M^1\left(t_{k+1}+\tau_m, \sigma_M\right)$ under the worst case of $(m+1)$-th DoS attacks  with the following relation
\begin{equation*}
    \begin{split}
        \tilde{x}^1_j(t_{k+2}+\tau_m+\tau_{m+1}) &=0.5 m_1\left(\Lambda_j(t_{k+1}+\tau_m)\right)\\ 
                                                 &+0.5 M_1\left(\Lambda_j(t_{k+1}+\tau_m)\right) \\  
                                                &\leqslant 0.5\left(M^1_B-\epsilon_1\right)+0.5 (M^1_B+\delta(t_k)) \\  
                                                &=M^1_B-0.5 \epsilon_1 + \delta(t_k).
    \end{split}
\end{equation*}
Similar to (\ref{upperbound_j_k+1}), we can derive
\begin{align*}
    x^1_j(t_{k+2}+\tau_m+\tau_{m+1}) \leqslant M^1_\mathcal{B}-\epsilon_2,
\end{align*}
where $\epsilon_2 \triangleq (0.5 c)^2 \epsilon_0-\delta(t_k)-\delta(t_{k+1})$. Also, we can find that the upper bound can also be applied to agent $i \in \mathcal{B} \setminus \mathcal{B}_m^1\left(t_{k+1}+\tau_m, \sigma_M\right)$. The previous analysis indicates either or both of the following relations hold: 
\begin{itemize}
    \item $\mathcal{B}_M^1\left(t_{k+2}+\tau_m+\tau_{m+1}, \sigma_M\right) \subseteq \mathcal{B}_M^1\left(t_{k+1}+\tau_m, \sigma_M\right)$;
    \item $\mathcal{B}_m^1\left(t_{k+2}+\tau_m+\tau_{m+1}, \sigma_m\right) \subseteq \mathcal{B}_m^1\left(t_{k+1}+\tau_m, \sigma_m\right)$.
\end{itemize}
From \cite{yan2022resilient}, if the network is only under agent-based attacks, the agreement will be achieved exponentially with the ``auxiliary point" method. Since after at most $|\mathcal{B}|$ steps, either or both of the sets $\mathcal{B}_M^1(t_k, M_{\mathcal{B}}^1\left(t_k\right)-\epsilon_0)$ and $\mathcal{B}_m^1(t_k, m_{\mathcal{B}}^1\left(t_k\right)+\epsilon_0)$ will shrink to an empty set from the recursive process above. Our proposed policy treats the DoS attacks as delays in the communication links. We denote $t_{\eta}$ as a sufficient large instant which means that the duration $\xi_\eta$ between $t_{k}$ and $t_{\eta}$ includes efficient $|\mathcal{B}|$ steps and time delays caused by DoS attacks, namely, $\xi_\eta \geq |\mathcal{B}|+|\Pi_{D} (t_{k}, t_{\eta})|$, as illustrated in Fig. 2. Define $\epsilon_\eta \triangleq (0.5c)^{\eta}\epsilon_0-\sum_{t=t_k}^{t_k+t_\eta-1} \delta(t)$, where $\eta > 1$. 
\begin{figure}[h]
	\centering
	\includegraphics[scale=0.3]{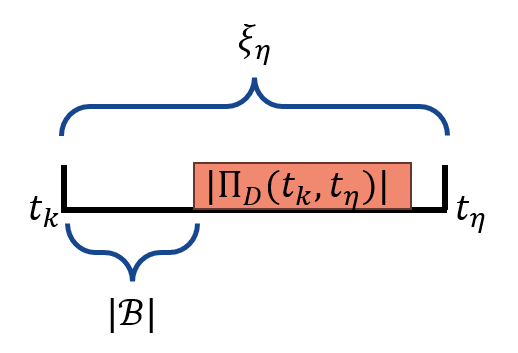}
	\caption{An explanation of time duration $\xi_\eta$, where the red area denoted the finite DoS attacks time between $t_k$ and $t_\eta$. $|\mathcal{B}|$ represents the effective updating steps.}
	\label{time duration for a single process}
\end{figure}
The vanishing property of two sets directly implies that one of the following relations must be true:
\begin{align} \label{empty_B_1}
	 \mathcal{B}_M^1\left(t_\eta,  M_{\mathcal{B}}^1\left(t_k\right)-\epsilon_0\right)=\emptyset, 
\end{align}
or,
\begin{align} \label{empty_B_2}
	\mathcal{B}_m^1\left(t_\eta,  m_{\mathcal{B}}^1\left(t_k\right)+\epsilon_0\right)=\emptyset,
\end{align}
or both. 
Direct results of (\ref{empty_B_1}) and (\ref{empty_B_2}) are as follows
\begin{align}
    M^1\left(t_{\eta}\right) \leq M^1\left(t_k\right)-\epsilon_{\eta},
\end{align}
or,
\begin{align}
    m^1\left(t_{\eta}\right) \geq m^1\left(t_k\right)+\epsilon_{\eta}.
\end{align}
Combining the two inequalities above, yields
\begin{equation}\label{Delta_vanish}
    \begin{split}
        \Delta^1\left(t_{\eta}\right) & \leq \Delta^1\left(t_k\right)-2\epsilon_\eta, \\
                                      & = (1-(0.5c)^{\eta})\Delta^1(t_k)+2 \sum_{t=t_k}^{t_k+t_{\eta}-1} \delta(t)
    \end{split}
\end{equation}
Now we do some analysis in order to apply Lemma 7 in \cite{nedic2010constrained}.
Let $\iota$ be an arbitrary positive integer. We consider the instant $t_{\iota\eta}$ and rearrange (\ref{Delta_vanish}) as
\begin{equation}\label{l_Delta}
    \begin{split}
        \Delta^1\left(t_{\iota\eta}\right) & \leq (1-(0.5c)^{\eta})^{\iota}\Delta(t_k) + \\
        & 2 \sum_{l=0}^{\textcolor[RGB]{170.00,0.0,0.00}{\iota-1}}(1-(0.5c)^\eta)^{\iota-1-l}\sum_{t=t_k+(\iota-1-l)\textcolor[RGB]{170.00,0.0,0.00}{t_{\eta}}}^{t_k+(\iota-l)\textcolor[RGB]{170.00,0.0,0.00}{t_{\eta}}-1} \delta(t).
    \end{split}
\end{equation}
Since (\ref{l_Delta}) can be considered as $\iota$ combinations of (\ref{Delta_vanish}), the time duration between $t_k$ and $t_{\iota\eta}$ can be explained in Fig. \ref{time duration for multi process}, which shows that Algorithm 1 can tolerate the DoS attacks satisfying after reaching the agreement. In this way, as illustrated in Assumption \ref{DoS-duration}, we only need to ensure that the duration of DoS attacks is finite within a certain range.
\begin{figure}[h]
	\centering
	\includegraphics[width=1.0\linewidth]{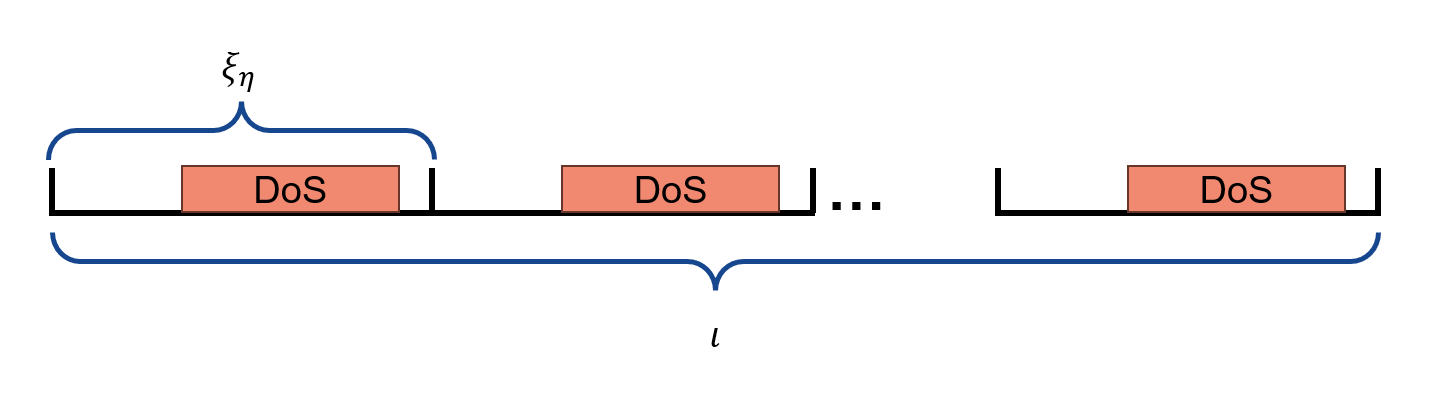}
	\caption{An explanation of time duration between $t_k$ and $t_{\iota\eta}$, where the red area denotes the independent finite DoS attacks time satisfying Assumption \ref{DoS-duration}. There is no requirement for general intervals to have consistent length.}
	\label{time duration for multi process}
\end{figure}

With Assumption \ref{residual assumption}, we have that
\begin{align} \label{lim = 0}
    \lim_{\iota \rightarrow \infty}\sum_{t=t_k+(\iota-1-l)\eta}^{t_k+(\iota-l)\eta-1} \delta(t) = 0.
\end{align}
According to Lemma 7 in \cite{nedic2010constrained}, if (\ref{lim = 0}) is guaranteed, the following asymptotic agreement along the first entry can be derived from (\ref{l_Delta}):
\begin{align*}
    \lim_{\iota \rightarrow \infty} \Delta^1(t_{\iota\eta}) = 0.
\end{align*}
Owing to the symmetric characteristic along each dimension, the proof is completed.

\end{proof}
Based on Lemmas \ref{vanish} and \ref{lemma agreement}, we directly conclude the problem of the resilient consensus against agent-based and DoS attacks is solved, which is summarized in the following theorem.
\begin{theorem}
	Consider an MAS cooperating over a \textbf{(d+1)F+1}-robust [\textbf{(dF+1, F+1)}-robust] digraph $\mathcal{G}=\{\mathcal{V}, \mathcal{E}\}$ under the F-local [F-total \textbf{resp.}] agent-based attacks. If Assumptions 1-3 are satisfied, Algorithm 1 guarantees that benign agents asymptotically achieve the resilient multi-dimensional consensus.
\end{theorem}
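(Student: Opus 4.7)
The plan is to derive the theorem by assembling the two preceding lemmas, since each of the two defining conditions of resilient multi-dimensional consensus, namely asymptotic agreement \eqref{Consensus} and $\delta$-validity, has already been established in isolation under precisely the hypotheses of the theorem. So the proof reduces to a verification step and two direct invocations.

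First, I would check that the hypotheses of both Lemma \ref{vanish} and Lemma \ref{lemma agreement} are implied by those of the theorem. The $(d+1)F+1$-robust (respectively $(dF+1,F+1)$-robust) topology paired with the $F$-local (respectively $F$-total) attack model is exactly what the two lemmas require. Assumption \ref{No of in neighbors} forces $|\mathcal{N}_i^{in}| \ge (d+1)F+1$, so the cardinality argument via Helly's theorem (Corollary 1 of \cite{yan2022resilient}) applied in the preamble to Lemma \ref{vanish} guarantees that $\Psi(\mathcal{Y}_i(p,t_k),F)$ and $\Psi(\mathcal{Z}_i(p,t_k),F)$ are nonempty, so $y_i(p,k)$, $z_i(p,k)$, the set $\Lambda_i(t_k)$, and the auxiliary point $\tilde{x}_i(t_k) = \text{Cen}(\Lambda_i(t_k))$ in Algorithm \ref{al1} are all well-defined at every iteration. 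Assumption \ref{residual assumption} supplies $\sum_{t_0}^{\infty}\|\varepsilon_i(t_k)\| < \infty$, which is consumed in Lemma \ref{vanish} to bound the cumulative inflation of $\Xi(t_k)$ and in Lemma \ref{lemma agreement} to drive the tail sum of $\delta(t_k)$ appearing in the recursion \eqref{l_Delta} to zero. Assumption \ref{DoS-duration} bounds the DoS active time, which Lemma \ref{lemma agreement} uses to guarantee that the waiting window $\xi_\eta \ge |\mathcal{B}| + |\Pi_D(t_k,t_\eta)|$ is finite.

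Second, I would invoke Lemma \ref{vanish} to obtain the $\delta$-validity condition: every benign agent's state remains inside $\Xi(t_0) + \mathbb{B}_\delta$ for every $t_k \ge t_0$. Third, I would invoke Lemma \ref{lemma agreement} to conclude that $\lim_{k \to \infty}[x_i(t_k)-x_j(t_k)] = 0$ for every pair $i,j \in \mathcal{B}$ and along every coordinate (the single-coordinate argument in the lemma extends by the symmetry noted at the end of its proof). Together with $\delta$-validity, this is precisely the resilient multi-dimensional consensus of Section III.

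There is essentially no obstacle beyond bookkeeping: all analytic work, namely the contraction of $\Xi(t_k)$ under the "previous-information" policy of Step 1, the shrinking of the extremal sets $\mathcal{B}_M^p(\cdot,\sigma_M)$ and $\mathcal{B}_m^p(\cdot,\sigma_m)$, the recursion $\Delta^1(t_{\iota\eta}) \le (1-(0.5c)^\eta)^\iota \Delta^1(t_k) + 2\sum_l (1-(0.5c)^\eta)^{\iota-1-l} \sum_t \delta(t)$, and the application of Lemma 7 of \cite{nedic2010constrained}, has been completed inside the two lemmas. The theorem's proof is therefore a one-line assembly: "Under the stated hypotheses, Lemma \ref{vanish} yields $\delta$-validity and Lemma \ref{lemma agreement} yields \eqref{Consensus}, and these two together are the definition of resilient multi-dimensional consensus."
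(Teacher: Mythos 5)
Your proposal is correct and matches the paper's own argument: the paper likewise derives Theorem 1 directly by combining Lemma \ref{vanish} ($\delta$-validity) with Lemma \ref{lemma agreement} (asymptotic agreement) under the shared hypotheses. Your additional verification that Assumptions 1--3 and the robustness/attack-model pairing feed into each lemma is just the bookkeeping the paper leaves implicit.
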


\begin{remark}
    Based on Definitions 1 and 2, it can be observed that a $((d + 1)F + 1)$-robust digraph is also $(dF + 1, F + 1)$-robust, but the reverse does not hold. In other words, a network capable of withstanding F-local attacks can also endure F-total attacks; however, the opposite is not necessarily true. This aligns with the understanding that F-total attacks represent a specific case of F-total attacks.
\end{remark}

\section{Resilient multi-dimensional distributed optimization}
\subsection{Algorithm design}
When each agent's local cost function exhibits multidimensionality, addressing resilient distributed optimization problems under agent-based attacks becomes significantly complex. Furthermore, the complexity escalates when combining agent-based and DoS attacks. From the problem formulation, the core distinction between the resilient consensus and optimization is that we need to ensure the final agreement states lie in the optimal set $\mathcal{X}^*$. We modify Algorithm 1 by incorporating a subgradient descent term to guarantee that the agreement point converges to the global minimizer.

\begin{algorithm}
\caption{RMDO Algorithm}
\label{Al_2}
\begin{algorithmic}[1]
\STATE Steps 1-3 are the same as those in Algorithm \ref{al1} to calculate the ``auxiliary point" $\tilde{x}_i(t_k)$.
\STATE Step 4: Control input of agent $i\in \mathcal{B}$ is designed as:
\begin{equation} \label{subgradient_update}
    u_i(t_k) = (1-\alpha_i)(\tilde{x}_i(t_k)-x_i(t_k))-\beta_{i,k} d_i(t_k),
\end{equation}
where the weight is chosen as $\alpha_i < c$, $1-\alpha_i < c$, and constant $0.5 \leq c <1$. $d_i(t_k)$ denotes the subgradient of the local cost function $f_i$ at $\alpha_ix_i(t_k)+(1-\alpha_i)\tilde{x}_i(t_k)$.
\STATE Step 5: Transmit the updated state $x_i(t_{k+1})$ to all out-neighbors $j\in \mathcal{N}_i^{out}$.

\end{algorithmic}
\end{algorithm}
\begin{remark}
    After utilizing the resilient consensus controller, the updating state is supposed to be $\alpha_ix_i(t_k)+(1-\alpha_i)\tilde{x}_i(t_k)$. To drive the final agreement toward the set of global optimizers, the subgradient is selected at the above point, taking into account that some local cost functions may not be differentiable. Compared with existing works \cite{gupta2021byzantine, yemini2022resilience}, our proposed method does not require a fully connected network or extra prior knowledge.
\end{remark}
The subgradient method widely adopts the following standard assumptions:
\begin{assumption} \label{bound of subgradient}
    The subgradient of each local cost function $f_i(x)$, $i\in \mathcal{V}$ is bounded. Specifically, given a positive scalar $L$, it holds that
    \begin{align*}
        \|d_i(t_k)\| \leq L,
    \end{align*}
    where $d_i(t_k)\in \partial f_i, i\in \mathcal{V}$.
\end{assumption}

\begin{assumption} \label{step size}
    The subgradient step-size sequence $\{\beta_{i,k}\}$ is non-negative, non-increasing and satisfies: $\lim_{k \rightarrow \infty} \beta_{i,k} = 0$, $\sum_{t_0}^{\infty}\beta_{i,k}=\infty$ and $\sum_{t_0}^{\infty} \beta_{i,k}^2 < \infty$.
\end{assumption}
\subsection{Performance analysis}
This paper considers a fully distributed optimization problem, that is, benign agents have no prior knowledge of the identities of adversarial agents. Under this setting, the objective redundancy can be used to ensure the following two optimization objectives are solved simultaneously:
\begin{enumerate} [1)]
    \item $\min \frac{1}{|\mathcal{B}|}f_{\mathcal{B}}(\boldsymbol{x}) = \min\frac{1}{|\mathcal{B}|} \sum_{i\in\mathcal{B}}f_i(\boldsymbol{x})$;
    \item $\min \frac{1}{N}f_{N}(\boldsymbol{x}) = \min\frac{1}{N} \sum_{i\in\mathcal{V}}f_i(\boldsymbol{x})$.
\end{enumerate}
The set of the optimal states $\mathcal{X}^*$ satisfies the following assumption.
\begin{assumption} \label{X* property}
    The set of optimal solutions $\mathcal{X}^*$ is bounded and nonempty.
\end{assumption}
Corollary 3 presented in \cite{zhu2023resilient} shows that if an MAS with $N$ agents is $F$-redundant for $F \geq 1$, satisfying Assumption \ref{X* property}, then for all $i\in \mathcal{V}$, the following two relations hold:
    \begin{align} \label{Corollary 3 for two targets}
        \begin{gathered}      
           \mathcal{X}^* \subset \underset{\boldsymbol{x}}{\arg \min } f_i(\boldsymbol{x}),\\
           \bigcap_{i \in \mathcal{V}} \underset{\boldsymbol{x}}{\arg \min } f_i(\boldsymbol{x})=\mathcal{X}^* .
        \end{gathered}
    \end{align}
The relations in (\ref{Corollary 3 for two targets}) demonstrate the equivalence of the two aforementioned
optimization problems.
We have the relation (\ref{safe kernel in Xi(t_k)}) which implies that the ``safe kernel" will never update outside the convex hull formed by the benign agents' states. Combining this relationship with the uniform bounded property (\ref{residual}) of the residual term, we can rewrite the updating rule (\ref{subgradient_update}) in Algorithm 2 as follows
\begin{align} \label{change_update}
    x_i(t_{k+1}) = \sum_{j\in \mathcal{B}} \Bar{a}_{ij}(t_k)x_j(t_k) - \beta_{i,k} d_i(t_k),
\end{align}
where $\Bar{a}_{ij}(t_k) \geq 0$ and $\sum_{j\in \mathcal{B}}\Bar{a}_{ij}(t_k) = 1$. For any instants $t_k \geq 0$, the row-stochastic matrix is defined as $\Bar{\mathcal{A}}(t_k)=(\Bar{a}_{ij}(t_k)) \subset \mathbb{R}^{|\mathcal{B}| \times |\mathcal{B}|}$. 

\begin{lemma}
    Consider an MAS cooperating over the \textbf{(d + 1)F + 1}-robust [\textbf{(dF + 1, F + 1)}-robust] digraph subject to the agent-based attacks which conform to the F-local [F-total \textbf{resp.}] attack model. If Assumption \ref{No of in neighbors} is satisfied, the row-stochastic matrix $\Bar{\mathcal{A}}(t_k)$ is Sarymsakov as Definition \ref{Sary}.
\end{lemma}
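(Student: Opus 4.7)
The plan is to verify directly the two conditions in Definition~\ref{Sary}, exploiting the strictly positive diagonal entries of $\bar{\mathcal{A}}(t_k)$ together with the $(d+1)F+1$-robustness (or $(dF+1,F+1)$-robustness) of $\mathcal{G}$. Reading (\ref{updating-complete}) against (\ref{change_update}), the self-weight satisfies $\bar{a}_{ii}(t_k) \geq \alpha_i > 0$ for every $i \in \mathcal{B}$, so $\mathcal{V}_k \subseteq \mathcal{I}_{\bar{\mathcal{A}}}(\mathcal{V}_k)$ for any subset $\mathcal{V}_k \subseteq \mathcal{B}$. Fixing arbitrary disjoint nonempty $\mathcal{V}_1, \mathcal{V}_2 \subsetneq \mathcal{B}$, if $\mathcal{I}_{\bar{\mathcal{A}}}(\mathcal{V}_1) \cap \mathcal{I}_{\bar{\mathcal{A}}}(\mathcal{V}_2) \neq \emptyset$ condition~1 is immediate, so I focus on the case when this intersection is empty and establish condition~2.

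In that case I would view $\mathcal{V}_1, \mathcal{V}_2$ as subsets of $\mathcal{V}$ and invoke the assumed robustness of $\mathcal{G}$ to obtain, without loss of generality, an agent $i \in \mathcal{V}_1$ with at least $(d+1)F+1$ in-neighbors in $\mathcal{V} \setminus \mathcal{V}_1$. Under the F-local (resp.\ F-total) attack model combined with Assumption~\ref{No of in neighbors}, at most $F$ of these are adversarial, leaving at least $dF+1$ benign in-neighbors of $i$ in $\mathcal{B} \setminus \mathcal{V}_1$. The critical step is then to show that at least one such $j \in \mathcal{B} \setminus \mathcal{V}_1$ receives a strictly positive weight $\bar{a}_{ij}(t_k)>0$ in the decomposition (\ref{change_update}). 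I would argue this through the safe-kernel construction: by Helly's theorem, each of the $2d$ points $y_i(p,t_k)$, $z_i(p,t_k)$ forming $\Lambda_i(t_k)$ lies in the convex hull of at least $dF+1$ benign in-neighboring states drawn from the extremal entries along dimension $p$, so the natural convex representation of $\tilde{x}_i(t_k) = \mathrm{Cen}(\Lambda_i(t_k))$ over $\mathcal{B} \cap \mathcal{N}_i^{in}$ cannot concentrate entirely on $\mathcal{V}_1 \cap \mathcal{B}$ without collapsing the extremal coordinates producing $y_i(p,t_k)$ and $z_i(p,t_k)$.

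With such a $j$ identified, the standing assumption together with $\mathcal{V}_2 \subseteq \mathcal{I}_{\bar{\mathcal{A}}}(\mathcal{V}_2)$ excludes $j \in \mathcal{V}_2$, so $j \in \mathcal{B} \setminus (\mathcal{V}_1 \cup \mathcal{V}_2)$, and combined with $\mathcal{V}_1 \cup \mathcal{V}_2 \subseteq \mathcal{I}_{\bar{\mathcal{A}}}(\mathcal{V}_1) \cup \mathcal{I}_{\bar{\mathcal{A}}}(\mathcal{V}_2)$ this yields $|\mathcal{I}_{\bar{\mathcal{A}}}(\mathcal{V}_1) \cup \mathcal{I}_{\bar{\mathcal{A}}}(\mathcal{V}_2)| > |\mathcal{V}_1 \cup \mathcal{V}_2|$, which is condition~2 of Definition~\ref{Sary}. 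The F-total case proceeds identically, using condition~3 of $(dF+1,F+1)$-robustness to supply $F+1$ eligible external vertices from which at least one benign agent survives the removal of the at most $F$ adversaries. The main obstacle I anticipate is making the positive-weight claim fully rigorous, since the convex representation of $\tilde{x}_i(t_k)$ over benign in-neighbors is not unique; the cleanest route is probably to fix a canonical representation inherited from Step~2 of Algorithm~1 (e.g.\ by taking the coordinate-wise Carathéodory weights of each $y_i(p,t_k)$ and $z_i(p,t_k)$ inside its $\Psi$-kernel and then averaging across the $2d$ centroid components) and then certify that the resulting weights on external benign in-neighbors cannot all vanish, as this would contradict the extremal structure enforced by the $dF+1$ external benign in-neighbors guaranteed by the robustness count.
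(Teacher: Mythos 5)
Your skeleton is the same one the paper uses: split on whether $\mathcal{I}_{\bar{\mathcal{A}}}(\mathcal{V}_1)\cap\mathcal{I}_{\bar{\mathcal{A}}}(\mathcal{V}_2)$ is empty, note that $\bar{a}_{ii}(t_k)\geq\alpha_i>0$ gives $\mathcal{V}_\ell\subseteq\mathcal{I}_{\bar{\mathcal{A}}}(\mathcal{V}_\ell)$, and in the empty case use robustness plus the bound of $F$ adversaries to produce one extra benign consequent index, whence $|\mathcal{I}_{\bar{\mathcal{A}}}(\mathcal{V}_1)\cup\mathcal{I}_{\bar{\mathcal{A}}}(\mathcal{V}_2)|>|\mathcal{V}_1\cup\mathcal{V}_2|$. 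You have in fact written this dichotomy more carefully than the paper does, and you have correctly located the one step that carries all the weight.

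That step, however, is not closed, and your proposed route to it does not work. The claim you need is that the benign in-neighbor $j\in\mathcal{B}\setminus\mathcal{V}_1$ supplied by robustness actually receives $\bar{a}_{ij}(t_k)>0$ in the decomposition (\ref{change_update}). The entries of $\bar{\mathcal{A}}(t_k)$ are not induced by the edges of $\mathcal{G}$; they come from whatever convex representation of $\tilde{x}_i(t_k)$ over benign in-neighboring states one fixes, and the support of that representation is controlled by the sorted extremal sets $\mathcal{Y}_i(p,t_k)$ and $\mathcal{Z}_i(p,t_k)$, not by $\mathcal{N}_i^{in}\setminus\mathcal{V}_1$. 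Nothing in the construction prevents all $2d$ of these extremal sets from being populated exclusively by in-neighbors lying in $\mathcal{V}_1$ (together with adversaries): if the states of $i$'s in-neighbors inside $\mathcal{V}_1$ occupy both coordinate extremes along every dimension while the external benign in-neighbors sit in the middle, then every $y_i(p,t_k)$ and $z_i(p,t_k)$, and hence the canonical Carath\'eodory representation of $\tilde{x}_i(t_k)$ you propose, places weight only on $\mathcal{V}_1\cap\mathcal{B}$. So the assertion that the representation ``cannot concentrate entirely on $\mathcal{V}_1\cap\mathcal{B}$ without collapsing the extremal coordinates'' is false as stated, and the argument that $j\in\mathcal{I}_{\bar{\mathcal{A}}}(\mathcal{V}_1)$ does not go through. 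To repair it you would need either a different (and explicitly constructed) representation whose support provably meets $\mathcal{N}_i^{in}\cap\mathcal{B}\setminus\mathcal{V}_1$, or a robustness argument applied to the effective support graph of $\bar{\mathcal{A}}(t_k)$ rather than to $\mathcal{G}$. For what it is worth, the paper's own proof simply asserts the cardinality inequality from robustness without addressing this mismatch at all, so you have not overlooked anything the authors supply; but as a standalone argument your proposal leaves the lemma unproved at exactly the point you flagged.
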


\begin{proof}
    To show $\Bar{\mathcal{A}}(t_k)$ is Sarymsakov, we consider two disjoint and nonempty subsets, $\mathcal{B}_1, \ \mathcal{B}_2 \subsetneq \mathcal{B}$  and an agent $i \in \mathcal{B}$. With Assumption \ref{No of in neighbors}, one of the following statements must hold:
\begin{enumerate}
    \item For $j \in \mathcal{F}_{\Bar{\mathcal{A}}(t_k)}(\mathcal{B}_2)$, $\mathcal{F}_{\Bar{\mathcal{A}}(t_k)}(\mathcal{B}_1) \cap \mathcal{F}_{\Bar{\mathcal{A}}(t_k)}(\mathcal{B}_2) = \emptyset$;
    \item For $j \notin \mathcal{F}_{\Bar{\mathcal{A}}(t_k)}(\mathcal{B}_2)$, $\mathcal{F}_{\Bar{\mathcal{A}}(t_k)}(\mathcal{B}_1) \cap \mathcal{F}_{\Bar{\mathcal{A}}(t_k)}(\mathcal{B}_2) = \emptyset$ holds. With the property of $((d+1)F+1)$-robustness [$(dF+1,\ F+1)$-robustness], we can conclude that $|\mathcal{F}_{\Bar{\mathcal{A}}(t_k)}(\mathcal{B}_1) \cup \mathcal{F}_{\Bar{\mathcal{A}}(t_k)}(\mathcal{B}_2)| \geq |\mathcal{B}_1|+|\mathcal{B}_1|+1 >|\mathcal{B}_1 \cup \mathcal{B}_2|$. 
\end{enumerate}
From Definition \ref{Sary}, it has been proved that $\Bar{\mathcal{A}}(t_k)$ is Sarymsakov.
\end{proof}

Furthermore, a useful definition is introduced as follows.
\begin{definition}
    \cite{kolmogoroff1936theorie} A stochastic vectors ${\boldsymbol{\pi}(t_k)}$, where $\sum_i \pi_i(t_k) = 1$ and $\pi_i(t_k) \geq 0$ is said to be an absolute probability sequence for a sequence of row-stochastic matrices $\mathcal{A}(t_k)$ if
    \begin{align*}
        \boldsymbol{\pi}^T(t_{k+1}) \mathcal{A}(t_k) = \boldsymbol{\pi}^T(t_k), \ \forall t_k \geq 0.
    \end{align*}
\end{definition}
It has been proved in \cite{kolmogoroff1936theorie} that the absolute probability sequence always exists for any row-stochastic matrices. Therefore, we can always find a sequence of stochastic vectors $\{\boldsymbol{\pi}(t_k)\}$ such that
\begin{align} \label{one-time jump}
    \boldsymbol{\pi}^T(t_{k+1})\Bar{\mathcal{A}}(t_k) = \boldsymbol{\pi}^T(t_{k}), \ \forall t_k \geq 0.
\end{align}
Multiplying $\Bar{\mathcal{A}}(t_{k-1})$, $\Bar{\mathcal{A}}(t_{k-2})$, ..., $\Bar{\mathcal{A}}(t_{0})$ recursively on the left hand of (\ref{one-time jump}) yields
\begin{align}\label{jump from 0 to t_k}
    \boldsymbol{\pi}^T(t_{k+1})\Bar{\mathcal{A}}(t_k, t_0) = \boldsymbol{\pi}^T(t_0),
\end{align}
where $\Bar{\mathcal{A}}(t_{k2}, t_{k1}) \triangleq \Bar{\mathcal{A}}(t_{k1})\Bar{\mathcal{A}}(t_{k1-1}) \dots \Bar{\mathcal{A}}(t_{k2})$, $t_{k1} > t_{k2}$.

To investigate benign states as time goes to infinity, we define $\pi_\infty \triangleq \lim_{t_k \rightarrow \infty} \pi(t_k)$. The following lemma is established to determine the existence set of optimal solutions as $t_k$ goes to $\infty$.
\begin{lemma} \label{pi(infty) exists}
    Consider an MAS cooperating over the \textbf{(d + 1)F + 1}-robust [\textbf{(dF + 1, F + 1)}-robust] digraph subject to DoS attacks. Agent-based attacks conform to the F-local [F-total \textbf{resp.}] attack model. If Assumptions 1-6 are satisfied:
	then $\pi_\infty = [\pi_{\infty 1},\ \dots,\ \pi_{\infty |\mathcal{B}|}]$ exists with the application of Algorithm 2. 
\end{lemma}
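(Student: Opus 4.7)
The plan is to combine the Sarymsakov property of $\bar{\mathcal{A}}(t_k)$ established in the preceding lemma with a uniform positivity bound on the nonzero entries, invoke the standard weak-ergodicity result for products of Sarymsakov matrices, and then deduce the existence of the limiting absolute probability vector from the backward recursion (\ref{jump from 0 to t_k}).

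First I would verify a uniform lower bound on the nonzero entries of $\bar{\mathcal{A}}(t_k)$. Writing the RMDO update (\ref{change_update}) as a convex combination over benign in-neighbors, the diagonal entry $\bar{a}_{ii}(t_k)$ collects the weight $\alpha_i$, and since Algorithm~\ref{Al_2} enforces $1-\alpha_i<c<1$ we obtain $\bar{a}_{ii}(t_k)\geq 1-c>0$. For off-diagonal entries, the auxiliary point $\tilde{x}_i(t_k)$ lies in the safe kernel, which admits a Carath\'eodory-type representation as a convex combination of the in-neighbors' states; under the QP-based selection scheme, these coefficients can be bounded below by some $\eta>0$ depending only on $d$, $F$, and the in-degree bound from Assumption~\ref{No of in neighbors}. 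Consequently every nonzero entry $\bar{a}_{ij}(t_k)$ exceeds a uniform constant $\gamma>0$.

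Next I would appeal to the classical fact that a sequence of row-stochastic Sarymsakov matrices with uniformly positive nonzero entries is weakly ergodic: for each fixed $m$, the backward product $\bar{\mathcal{A}}(t_k,t_m)$ converges as $k\to\infty$ to a rank-one matrix $\mathbf{1}\boldsymbol{\phi}(t_m)^{T}$, with a geometric rate governed by a Dobrushin/scrambling coefficient depending only on $\gamma$ and $|\mathcal{B}|$. Combining this with (\ref{jump from 0 to t_k}) and the normalization $\boldsymbol{\pi}^{T}(t_{k+1})\mathbf{1}=1$ identifies $\boldsymbol{\phi}(t_m)=\boldsymbol{\pi}(t_m)$, so $\boldsymbol{\pi}(t_m)$ is precisely the common row of the limiting rank-one matrix.

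Finally, to extract $\pi_\infty$, I would show $\{\boldsymbol{\pi}(t_k)\}$ is Cauchy. Iterating the absolute-probability relation gives $\boldsymbol{\pi}^{T}(t_k)=\boldsymbol{\pi}^{T}(t_{k+s})\bar{\mathcal{A}}(t_{k+s-1},t_k)$, and substituting the rank-one contraction of the previous step yields an estimate of the form $\|\boldsymbol{\pi}(t_k)-\boldsymbol{\pi}(t_{k+s})\|_1\leq C\lambda^{\lfloor s/L\rfloor}$ for some $\lambda\in(0,1)$ and integer $L$ independent of $k$, which forces convergence to a limit $\pi_\infty$. The main obstacle will be establishing that the uniform positive-entry bound survives DoS attacks: when edges are blocked the ``last-known-state'' policy of Step~1 shuffles the effective weights entering $\bar{a}_{ij}(t_k)$, and one must verify that this reshuffling keeps every admissible nonzero $\bar{a}_{ij}(t_k)$ above $\gamma$ across the switching of DoS intervals. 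This reduces to a finite combinatorial check over the graph configurations permitted by Assumption~\ref{No of in neighbors}, together with the duration bound in Assumption~\ref{DoS-duration} to ensure the scrambling contraction accumulates geometrically rather than being perpetually stalled by DoS.
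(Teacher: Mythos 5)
Your proposal follows essentially the same route as the paper: you show that the backward products of the Sarymsakov matrices $\bar{\mathcal{A}}(t_k)$ converge to a rank-one matrix $\mathbf{1}\boldsymbol{\pi}^{T}$ and then read off $\pi_\infty$ from the absolute-probability relation (\ref{jump from 0 to t_k}), which is exactly what the paper does by invoking closure of the Sarymsakov class under multiplication together with Theorem~1 of \cite{xia2014sarymsakov}. The main difference is that you additionally attempt to verify a uniform lower bound on the nonzero entries of $\bar{\mathcal{A}}(t_k)$ (and its persistence under the hold-last-state DoS policy) and make the Cauchy argument for $\{\boldsymbol{\pi}(t_k)\}$ explicit --- compactness-type hypotheses that the cited ergodicity theorem requires but that the paper's two-line proof leaves implicit.
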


\begin{proof}
    Since the set of Sarymsakov matrices is closed under matrix multiplication and any $\Bar{\mathcal{A}}(t_k)$ is Sarymsakov, we can say that $\Bar{\mathcal{A}}(t_k, t_0)$ belongs to Sarymsakov class. Referred to Theorem 1 of \cite{xia2014sarymsakov}, there always exists a column stochastic vector $\pi = [\pi_1,\dots, \pi_N]$ with $\pi_i \geq 0$ and $\sum_{i=1}^{N} \pi_i=1$, such that
    \begin{align}
        \lim_{t_k \rightarrow \infty} \Bar{\mathcal{A}}(t_k, t_0) = \mathbf{1} \boldsymbol{\pi}^T.
    \end{align}
    Combined with (\ref{jump from 0 to t_k}), Lemma \ref{pi(infty) exists} is proved.
\end{proof}

Let $\mathcal{X}^*_{\pi} \triangleq \underset{x}{\arg \min } \sum_{i \in \mathcal{B}} \pi_{\infty}f_i(x)$. We have the following lemma for the relation between $\mathcal{X}^*$ and $\mathcal{X}^*_{\pi}$.
\begin{lemma} \label{X and X pi relation}
    If an MAS consisting $N$ agents is $F$ redundant and Assumption \ref{X* property} is satisfied, the following relation holds
    \begin{align}\label{X and X pi}
        \mathcal{X}^* = \bigcup_{\pi_\infty} \mathcal{X}^*_{\pi}.
    \end{align}
\end{lemma}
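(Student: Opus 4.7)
The plan is to prove the two set inclusions $\mathcal{X}^* \subseteq \bigcup_{\pi_\infty} \mathcal{X}^*_{\pi}$ and $\bigcup_{\pi_\infty} \mathcal{X}^*_{\pi} \subseteq \mathcal{X}^*$ separately, leveraging the $F$-redundancy consequences recorded in (\ref{Corollary 3 for two targets}) together with Lemma \ref{r-redundant} and the structural properties of $\bar{\mathcal{A}}(t_k)$ established in the preceding lemmas.

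For the forward inclusion I would first invoke (\ref{Corollary 3 for two targets}), which asserts $\mathcal{X}^* \subset \arg\min_{x} f_i(x)$ for every $i\in\mathcal{V}$; in other words, every local cost attains its global minimum on $\mathcal{X}^*$. Fixing any valid limit $\pi_\infty$ and any $x^* \in \mathcal{X}^*$, this yields $\sum_{i\in\mathcal{B}} \pi_{\infty,i}\, f_i(x^*) = \sum_{i\in\mathcal{B}} \pi_{\infty,i}\min_x f_i(x)$, which is clearly a lower bound on $\sum_{i\in\mathcal{B}} \pi_{\infty,i}\, f_i(x)$ because of the non-negativity of $\pi_{\infty,i}$ and $f_i(x) \geq \min_x f_i(x)$ for all $x$. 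Hence $x^* \in \mathcal{X}^*_\pi$, giving $\mathcal{X}^* \subseteq \mathcal{X}^*_\pi$ for every $\pi_\infty$, and therefore for their union.

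For the reverse inclusion I would pick any $y \in \mathcal{X}^*_\pi$ for some particular $\pi_\infty$. The identity $\sum_{i\in\mathcal{B}} \pi_{\infty,i}\, f_i(y) = \sum_{i\in\mathcal{B}} \pi_{\infty,i}\min_x f_i(x)$ combined with $\pi_{\infty,i} \geq 0$ and $f_i(y) \geq \min_x f_i(x)$ forces $f_i(y) = \min_x f_i(x)$ for every $i$ in the support $I = \mathrm{supp}(\pi_\infty)$, i.e., $y \in \bigcap_{i \in I}\arg\min_{x} f_i(x)$. Next I would argue that $I = \mathcal{B}$: the update rule (\ref{subgradient_update}) contributes a self-weight bounded below by $\alpha_i > 0$, so $\bar{\mathcal{A}}(t_k)$ has a strictly positive diagonal, and the ``safe kernel'' construction under $((d+1)F+1)$-robustness (or $(dF+1,F+1)$-robustness) guarantees that the surviving benign-to-benign weights induce a strongly connected weighted graph on $\mathcal{B}$. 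Combined with the Sarymsakov property established in the preceding lemma and the rank-one limit from Lemma \ref{pi(infty) exists}, these facts should yield entrywise strict positivity of $\pi_\infty$ on $\mathcal{B}$; in particular $|I| = |\mathcal{B}| \geq N-F$.

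With $|I| \geq N - F$ in hand, I would apply Lemma \ref{r-redundant} to $\mathcal{V}_1 = I = \mathcal{B}$ to obtain $\arg\min_{x} \sum_{i \in \mathcal{B}} f_i(x) = \mathcal{X}^*$; since $y$ minimizes every $f_i$ with $i \in \mathcal{B}$, it also minimizes the unweighted sum and therefore lies in $\mathcal{X}^*$, closing the reverse inclusion. I expect the most delicate step to be the strict positivity claim on $\pi_\infty$: the combinatorial robustness guarantee must be converted into quantitative positivity bounds for long products of $\bar{\mathcal{A}}(t_k)$, and one must further verify that the DoS-induced freezing of in-neighbor states prescribed in Step 1 of Algorithm 1 does not let any benign agent's asymptotic influence collapse to zero in the Sarymsakov limit.
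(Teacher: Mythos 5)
Your decomposition into two inclusions follows the same underlying idea as the paper's proof --- both hinge on the objective-redundancy facts in (\ref{Corollary 3 for two targets}) and Lemma \ref{r-redundant} --- but the paper's own argument is only a three-sentence sketch (``combining the fact that $\sum_{i}\pi_{\infty i}=1$, we can derive $\mathcal{X}^*=\mathcal{X}^*_{\pi}$''), so your version is substantially more explicit about why a \emph{weighted} argmin coincides with the unweighted one. Your forward inclusion is complete: since $\mathcal{X}^*\subset\arg\min_x f_i(x)$ for every $i$ and $\mathcal{X}^*\neq\emptyset$ by Assumption \ref{X* property}, every point of $\mathcal{X}^*$ minimizes each $f_i$ simultaneously and hence any nonnegative combination of them. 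Your reverse inclusion correctly exploits the interchange $\min_x\sum_i\pi_{\infty i}f_i(x)=\sum_i\pi_{\infty i}\min_x f_i(x)$ (valid precisely because a common minimizer exists) to conclude $f_i(y)=\min_x f_i(x)$ for all $i$ in the support of $\pi_\infty$.

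The one genuine gap is the step you yourself flag: the claim that $\mathrm{supp}(\pi_\infty)=\mathcal{B}$, or at least that it has cardinality no less than $N-F$. Without it the reverse inclusion fails --- if, say, $\pi_\infty$ were a coordinate vector, then $\mathcal{X}^*_{\pi}=\arg\min_x f_j(x)$ for a single $j$, which by (\ref{Corollary 3 for two targets}) merely \emph{contains} $\mathcal{X}^*$ and may be strictly larger. Your sketch for closing it (positive diagonals from $\alpha_i>1-c>0$, the Sarymsakov property, and robustness of the benign subgraph forcing uniform positivity of the absolute probability sequence) points in the right direction, but turning it into a quantitative bound $\pi_{\infty i}\geq\eta>0$ for time-varying backward products with DoS-frozen in-neighbor states is nontrivial and is not actually carried out. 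To be fair, the paper's proof silently assumes exactly the same thing --- it never discusses the support of $\pi_\infty$ at all --- so your attempt is, if anything, more honest about where the real difficulty lies. A further small point shared with the paper: under the $F$-local model $|\mathcal{B}|\geq N-F$ is not automatic, so applying Lemma \ref{r-redundant} with $\mathcal{V}_1=\mathcal{B}$ implicitly restricts the admissible attack sets.
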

\begin{proof}
    Lemma \ref{pi(infty) exists} guarantees the existence of $\mathcal{X}^*_{\pi}$. Moreover, from Lemma \ref{r-redundant}, $F$-redundant digraph guarantees that any subsets $\mathcal{V}' \in \mathcal{V}$ with cardinality $|\mathcal{V}'| \geq |N-F|$ satisfying $\underset{x}{\arg \min } \sum_{i \in \mathcal{V}'(t_k)} f_i(x) = \mathcal{X}^*$. Combining the fact that $\sum_{i=1}^{N} \pi_{\infty i}=1$, we can derive $\mathcal{X}^* = \mathcal{X}^*_{\pi}$ when $\pi_\infty$ is unique. However, considering the non-uniqueness of $\pi_\infty$, we can further infer the relation given in (\ref{X and X pi}).
\end{proof}

Lemma \ref{X and X pi relation} indicates that the optimization target (\ref{optimization target}) will be achieved if all benign agents' states stay within the set $\mathcal{X}^*_{\pi}$ as time goes to infinity. Based on this idea, Theorem 2 is established as follows:
\begin{theorem}
   Consider an MAS cooperating over a \textbf{(d+1)F+1}-robust [\textbf{(dF+1, F+1)}-robust] digraph $\mathcal{G}=\{\mathcal{V}, \mathcal{E}\}$ under the F-local [F-total \textbf{resp.}] agent-based attacks. If Assumptions 1-6 are satisfied and local cost functions are $F$-redundant, Algorithm 2 guarantees that benign agents asymptotically achieve the RMDO.
\end{theorem}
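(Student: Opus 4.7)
The plan is to split the argument for Theorem 2 into a consensus stage and an optimization stage, linking them through the absolute probability sequence whose existence is guaranteed by Lemma 5.

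First, I would establish consensus by treating the compact form $x_i(t_{k+1}) = \sum_{j \in \mathcal{B}} \bar{a}_{ij}(t_k) x_j(t_k) - \beta_{i,k} d_i(t_k)$ as the Algorithm~1 dynamics perturbed by $-\beta_{i,k} d_i(t_k)$. Assumption 4 bounds this perturbation by $L\beta_{i,k}$, and $\beta_{i,k}\to 0$ by Assumption 5. Lemma 4 guarantees that each $\bar{\mathcal{A}}(t_k)$ is Sarymsakov and Lemma 5 then yields $\bar{\mathcal{A}}(t_k, t_0) \to \mathbf{1}\pi_\infty^{\T}$. Following the recursion developed in the proof of Lemma 3 and absorbing the vanishing $L\beta_{i,k}$-perturbation through Lemma 7 of \cite{nedic2010constrained}, the benign-agent disagreement vanishes asymptotically; the finite-duration bound in Assumption 3 ensures that the DoS-induced stalls do not destroy this contraction.

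Next, for the optimization stage, I would study the absolute-probability-weighted running average $v(t_k) \triangleq \sum_{i \in \mathcal{B}} \pi_i(t_k) x_i(t_k)$. Premultiplying (15) by $\pi_i(t_{k+1})$, summing over $i \in \mathcal{B}$, and invoking the identity $\boldsymbol{\pi}^{\T}(t_{k+1})\bar{\mathcal{A}}(t_k) = \boldsymbol{\pi}^{\T}(t_k)$ collapses the consensus contribution and leaves
\begin{equation*}
v(t_{k+1}) = v(t_k) - \sum_{i \in \mathcal{B}} \pi_i(t_{k+1})\,\beta_{i,k}\, d_i(t_k),
\end{equation*}
which is a perturbed subgradient descent on $\sum_{i \in \mathcal{B}} \pi_{\infty,i} f_i$. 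Its minimizer set $\mathcal{X}^*_{\pi}$ is contained in $\mathcal{X}^*$ by Lemma 6. For any $x^* \in \mathcal{X}^*_{\pi}$, I would expand $\|v(t_{k+1}) - x^*\|^2$, use convexity of each $f_i$ together with Assumption 4 to control the quadratic term, and invoke the divergence-plus-square-summability of $\{\beta_{i,k}\}$ from Assumption 5 in a Robbins--Siegmund-type argument to conclude $v(t_k) \to x^* \in \mathcal{X}^*$. The consensus result from stage one then transfers this limit to every benign agent, establishing (8).

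The main obstacle is the mismatch between the point $\alpha_i x_i(t_k) + (1-\alpha_i)\tilde{x}_i(t_k)$ at which the subgradient $d_i(t_k)$ is evaluated and the common vector $v(t_k)$ around which the descent argument must be centered. Controlling the resulting cross term reduces to showing $\sum_{k \geq 0} \beta_{i,k} \|x_i(t_k) - v(t_k)\| < \infty$ for every benign $i$. Since $\sum_k \beta_{i,k} = \infty$, the consensus error must shrink fast enough that its product with $\beta_{i,k}$ is summable; this is where the exponential-type contraction implicit in the proof of Lemma 3, combined with the $T_d$-restricted DoS duration in Assumption 3, becomes indispensable. Making this summability quantitative, uniformly across the time windows stretched by DoS intervals, is the technical heart of the proof and the step I expect to require the most delicate estimates.
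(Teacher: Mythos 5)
Your two-stage architecture matches the paper's: consensus is obtained by viewing (15) as the Algorithm~1 dynamics perturbed by the vanishing term $-\beta_{i,k}d_i(t_k)$ (the paper's inequality (30) is exactly your perturbed version of (23)), and optimality is obtained through the absolute probability sequence of Lemma~5 and the identification $\mathcal{X}^*=\bigcup_{\pi_\infty}\mathcal{X}^*_{\pi}$ of Lemma~6. The one genuine divergence is the Lyapunov-type quantity in the second stage: you track the distance of the $\pi$-weighted average $v(t_k)=\sum_i\pi_i(t_k)x_i(t_k)$ to $\mathcal{X}^*_{\pi}$ and run a Robbins--Siegmund argument to get full convergence of that single sequence, whereas the paper tracks the weighted sum of individual squared distances $D_{t_k}^2=\sum_i\pi_i(t_k)d_{\mathcal{X}^*_{\pi}}(x_i(t_k))$, telescopes, and concludes only $\liminf_{k}d_{\mathcal{X}^*_{\pi}}(\bar{x}(t_k))=0$ before invoking consensus. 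The two quantities are interchangeable via convexity of $\|\cdot\|^2$; your route buys the stronger $\lim$ (rather than $\liminf$) conclusion at the cost of the supermartingale machinery, while the paper's is more elementary but leaves a small gap in passing from $\liminf$ to the limit statement (8).

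On the cross term you flag as the technical heart: you are right that this is the delicate step, and you are in fact more careful than the paper. The paper disposes of it with the claim that the ``proved agreement condition and $\lim_k\beta_{i,k}=0$ imply $2L\sum_k\beta_{i,k}\sum_i\pi_i(t_{k+1})[\bar{x}(t_k)-v_i(t_k)]<\infty$,'' but pointwise vanishing of both factors does not give summability of their product when $\sum_k\beta_{i,k}=\infty$. What actually closes this is exactly what you identify: the geometric contraction rate $(1-(0.5c)^{\eta})^{\iota}$ from the Lemma~3 recursion (uniform over the DoS-stretched windows thanks to Assumption~3), which makes the consensus error itself summable against $\beta_{i,k}$. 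If you carry out that quantitative estimate, your proof is complete and, on this point, tighter than the paper's.
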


\begin{proof}
     We now commence with the proof of agreement. Similar to (\ref{Delta_vanish}), and with the subgradient term, we have
\begin{equation}\label{vanish-delta-sub}
    \begin{split}
      \Delta^1\left(t_{\iota\eta}\right) & \leq (1-(0.5c)^{\eta})^{\iota}\Delta(t_k) + 2L\iota\eta\beta_{i,\iota\eta}\\
        & 2 \sum_{l=0}^{\iota-1}(1-(0.5c)^\eta)^{\iota-1-l}\sum_{t=t_k+(\iota-1-l)t_{\eta}}^{t_k+(\iota-l)t_{\eta}}\delta(t).
    \end{split}
\end{equation}
    Given the fact that $\beta_{i,\iota\eta} \rightarrow 0$ as $\iota$ goes to $\infty$, (\ref{vanish-delta-sub}) is the same as (\ref{Delta_vanish}) as $\iota$ goes to $\infty$. We can say the agreement can be achieved asymptotically.
    
    Before the proof of the optimization target (\ref{optimization target}), some notations are given: $ v_i(t_k) = \sum_{j\in \mathcal{B}} \Bar{a}_{ij}x_j(t_k)$ and $\Bar{x}(t_k) = \frac{1}{|\mathcal{B}|} \sum_{i \in \mathcal{B}} x_i(t_k)$.
    The distance between a point $x$ and a set $\mathcal{S}$ is defined as:
    \begin{align*}
        d_{\mathcal{S}}(x) \triangleq  \inf_{y \in \mathcal{S}}\|x-y\|^2.
    \end{align*}
    For any benign agent, we denote an arbitrary point in $\mathcal{X}^*_{\pi}$ as $x^* = \sum_{j\in \mathcal{B}} \Bar{a}_{ij}(t_k)\mathcal{P}_{\mathcal{X}^*_{\pi}}(x_j(t_k))$, where $\mathcal{P}_{\mathcal{X}^*_{\pi}}(\cdot)$ is the projection operator.
    We still consider the worst case for the benign agent $i$, that is, $i$'s in-neighboring states are interrupted by $m$-th DoS attacks. We can rewrite (\ref{change_update}) as follows
    \begin{equation*}
        \begin{split}
            x_i(t_{k+1}+\tau_m) &= \sum_{j\in \mathcal{B}} \Bar{a}_{ij}x_j(t_k) - \beta_{i,k} d_i(t_k) \\
                                &= v_i(t_k) - \beta_{i,k} d_i(t_k).
        \end{split}
    \end{equation*}
    Then we investigate the distance from $x_i(t_{k+1}+\tau_m)$ to the optimal set $\mathcal{X}_\pi^*$:
    \begin{equation*}
        \begin{split}
            &d_{\mathcal{X}_{\pi}^*}(x_i(t_{k+1}+\tau_m)) \\
            &= \|x_i(t_{k+1}+\tau_m)-x^*\|^2\\
            &=\|v_i(t_k) - \beta_{i,k} d_i(t_k)-x^*\|^2\\
            &=\|v_i(t_k) -x^*\|^2 + \beta_{i,k}^2\|d_i(t_k)\|^2 - 2\beta_{i,k}  \langle d_i(t_k), v_i(t_k)-x^* \rangle
        \end{split}
    \end{equation*}
    From the property of the subgradient, $d_i(t_k)(v_i(t_k)-x^*) \geq f_i(v_i(t_k))-f_i(x^*)$ holds. Combined with the convexity of the squared norm and Assumption \ref{bound of subgradient}, we derive the following inequalities:
    \begin{equation*}
        \begin{split}
         &\quad \ \|v_i(t_k) -x^*\|^2 + \beta_{i,k}^2\| d_i(t_k)\|^2 + 2\beta_{i,k} \langle d_i(t_k), x^*-v_i(t_k) \rangle \\
         & \leq   \sum_{j\in \mathcal{B}} \Bar{a}_{ij}(t_k)\|x_j(t_k) -x^*\|^2 +\beta_{i,k}^2L^2 \\ 
         & \quad  -2\beta_{i,k}[f_i(v_i(t_k))-f_i(x^*)] \\
         &=\sum_{j\in \mathcal{B}} \Bar{a}_{ij}(t_k)d_{\mathcal{X}_{\pi}^*}(x_j(t_k)) +\beta_{i,k}^2 L^2  \\  &\quad  +2\beta_{i,k}[f_i(\Bar{x}(t_k))-f_i(v_i(t_k))]-2\beta_{i,k}[f_i(\Bar{x}(t_k))-f_i(x^*)] \\
         &\leq \sum_{j\in \mathcal{B}} \Bar{a}_{ij}(t_k) d_{\mathcal{X}_{\pi}^*}(x_j(t_k)) +\beta_{i,k}^2L^2 + 2\beta_{i,k} L [\Bar{x}(t_k)-v_i(t_k)] \\
         &\quad  + 2\beta_{i,k}[f_i(\Bar{x}(t_k))-f_i(x^*)].
        \end{split}
    \end{equation*}
    Multiply $\pi_i(t_{k+1}+\tau_m)$ and add up the above relations for all $i \in \mathcal{B}$. Owing to  (\ref{one-time jump}) and the definition of the stochastic vector, $\pi_j(t_k)=\sum_{i \in \mathcal{B}} \Bar{a}_{ij}(t_k)\pi_i(t_{k+1}+\tau_m)$ and $\sum_{i \in \mathcal{B}}\pi_i(t_{k+1}+\tau_m)\beta_{i,k}^2L^2 = \beta_{i,k}^2L^2$ hold. Then we have the following inequality:
    \begin{equation*}
        \begin{split}
            & \quad \ \sum_{i \in \mathcal{B}}\pi_i(t_{k+1}+\tau_m) d_{\mathcal{X}_{\pi}^*}(x_i(t_{k+1}+\tau_m)) \\ 
            &\leq \sum_{j\in \mathcal{B}} \pi_j(t_k) d_{\mathcal{X}_{\pi}^*}(x_j(t_k)) +\beta_{i,k}^2L^2 \\ 
            & \quad \ + 2\beta_{i,k} L \sum_{i \in \mathcal{B}}\pi_i(t_{k+1}+\tau_m)[\Bar{x}(t_k)-v_i(t_k)]\\
            &\quad \  - 2\beta_{i,k}\{\sum_{i \in \mathcal{B}}\pi_i(t_{k+1}+\tau_m)[f_i(\Bar{x}(t_k))-f_i(x^*)]\}.
        \end{split}
    \end{equation*}
    Define $D_{t_k}^2 \triangleq \sum_{i \in \mathcal{B}}\pi_i(t_k) d_{\mathcal{X}_{\pi}^*}(x_i(t_k))$, we can rewrite the above inequality as
    \begin{equation}\label{D_k inequality}
        \begin{split}
            &2\beta_{i,k}\{\sum_{i \in \mathcal{B}}\pi_i(t_{k+1}+\tau_m)[f_i(\Bar{x}(t_k))-f_i(x^*)]\} \\ &\leq  D_{t_k}^2 -D_{t_{k+1}+\tau_m}^2+ \beta_{i,k}^2L^2 \\
                                      & + 2\beta_{i,k} L \sum_{i \in \mathcal{B}}\pi_i(t_{k+1}+\tau_m)[\Bar{x}(t_k)-v_i(t_k)].
        \end{split}
    \end{equation}
    Let $\theta$ be a sufficiently large integer such that $t_\theta$ includes the efficient updating instants and all DoS attacks' duration. Recursively adding up both sides of (\ref{D_k inequality}) from $t_0$ to $t_\theta$, we have the following inequalities:
    \begin{equation}\label{D_theta inequality}
        \begin{split}
            &2\sum_{t_0}^{t_\theta}\beta_{i,k}\{\sum_{i \in \mathcal{B}}\pi_i(t_{k+1})[f_i(\Bar{x}(t_k))-f_i(x^*)]\} \\ 
            &\leq  D_{t_0}^2 -D_{t_{\theta}}^2+ L^2 \sum_{t_0}^{t_{\theta-1}}\beta_{i,k}^2 \\
            & + 2 L \sum_{t_0}^{t_{\theta-1}} \beta_{i,k} \sum_{i \in \mathcal{B}}\pi_i(t_{k+1})[\Bar{x}(t_k)-v_i(t_k)].
        \end{split}
    \end{equation}
    As $t_\theta$ goes to $\infty$, we can rewrite (\ref{D_theta inequality}) as follows
    \begin{equation}\label{D_infty inequality}
        \begin{split}
            &2\sum_{t_0}^{\infty}\beta_{i,k}\{\sum_{i \in \mathcal{B}}\pi_i(t_{k+1})[f_i(\Bar{x}(t_k))-f_i(x^*)]\} \leq  D_{t_0}^2 +\\ 
            &\quad  L^2 \sum_{t_0}^{\infty}\beta_{i,k}^2 
             + 2 L \sum_{t_0}^{\infty} \beta_{i,k} \sum_{i \in \mathcal{B}}\pi_i(t_{k+1})[\Bar{x}(t_k)-v_i(t_k)].
        \end{split}
    \end{equation}
    The proved agreement condition and $\lim_{k \rightarrow \infty} \beta_{i,k} = 0$ in Assumption \ref{step size} imply $2 L \sum_{t_0}^{\infty} \beta_{i,k} \sum_{i \in \mathcal{B}}\pi_i(t_{k+1})[\Bar{x}(t_k)-v_i(t_k)] < \infty$. With $\sum_{t_0}^{\infty} \beta_{i,k}^2 < \infty$ in Assumption \ref{step size}, we can claim that $\sum_{t_0}^{\infty}\beta_{i,k}\{\sum_{i \in \mathcal{B}}\pi_i(t_{k+1})[f_i(\Bar{x}(t_k))-f_i(x^*)]\} < \infty$. Since $\sum_{t_0}^{\infty}\beta_{i,k}=\infty$, the above inequality indicates
    \begin{align}
        \lim_{t_k \rightarrow \infty} \inf \sum_{i \in \mathcal{B}}\pi_i(t_{k+1})[f_i(\Bar{x}(t_k))-f_i(x^*)] =0,
    \end{align}
    which entails
    \begin{align}
        \lim_{t_k \rightarrow \infty} \inf d_{\mathcal{X}_\pi^*}(\Bar{x}(t_k))=0.
    \end{align}
    Combined with the agreement condition and Lemma \ref{X and X pi relation}, $\lim_{t_k \rightarrow \infty} \sum_{i \in \mathcal{B}}\|\Bar{x}(t_k)-v_i(t_k)\|^2=0$. The optimization target (\ref{optimization target}) is guaranteed. 
\end{proof}

\begin{remark}
    The trade-off between the convergence rate and the robustness against external disturbances is related to the upper bound \(c\) of the weight \(\alpha_i\). Specifically, selecting \(c\) closer to 1 yields a faster convergence rate, whereas choosing \(c\) closer to 0.5 improves the robustness against external disturbances. This relationship is explicitly characterized by inequalities (\ref{Delta_vanish}) and (\ref{vanish-delta-sub}). These inequalities demonstrate that a larger \(c\) leads to a faster convergence rate but also amplifies the cumulative effect of disturbances over the finite time interval \([t_k, t_{\iota\eta}]\). As \(\iota \to \infty\), according to Assumption \ref{residual assumption}, the asymptotic consensus performance is ensured regardless of the specific choice of \(c\), provided that the weight selection satisfies the design condition.
\end{remark}

\section{Numerical examples}
Now, we utilize the following numerical examples to illustrate and verify the established results. The digraph $\mathcal{G}=\{\mathcal{V}, \mathcal{E}\}$ is with $[(d+1)F+1]-$robustness, where $d=2,F=2$. $x_i(1)$ and $x_i(2)$ represent the first and second entries of the state $x_i$. The agent-based attacks align with the $F$-local model. As shown in Fig. \ref{digraph}, agents $2$ and $3$ are adversarial, which will update their state as $x_2(1)=3 \sin t_k, x_2(2)=t_k+2$ and $x_{12}(1)=6\cos{t_k}, x_{12}(2)=3t_k$. The residual term is set to be $0.1^{t_k}$. And the weight $\alpha_i = 0.5$.  The communication
topology is as depicted in Fig. \ref{digraph}. It can be noted that all benign agents except agents $1$ and $7$ have $7$ in-neighbors, whereas agents $1$ and $7$ have $8$ in-neighbors to ensure the generality.
\begin{figure}[h]
	\centering
	\includegraphics[scale=0.7]{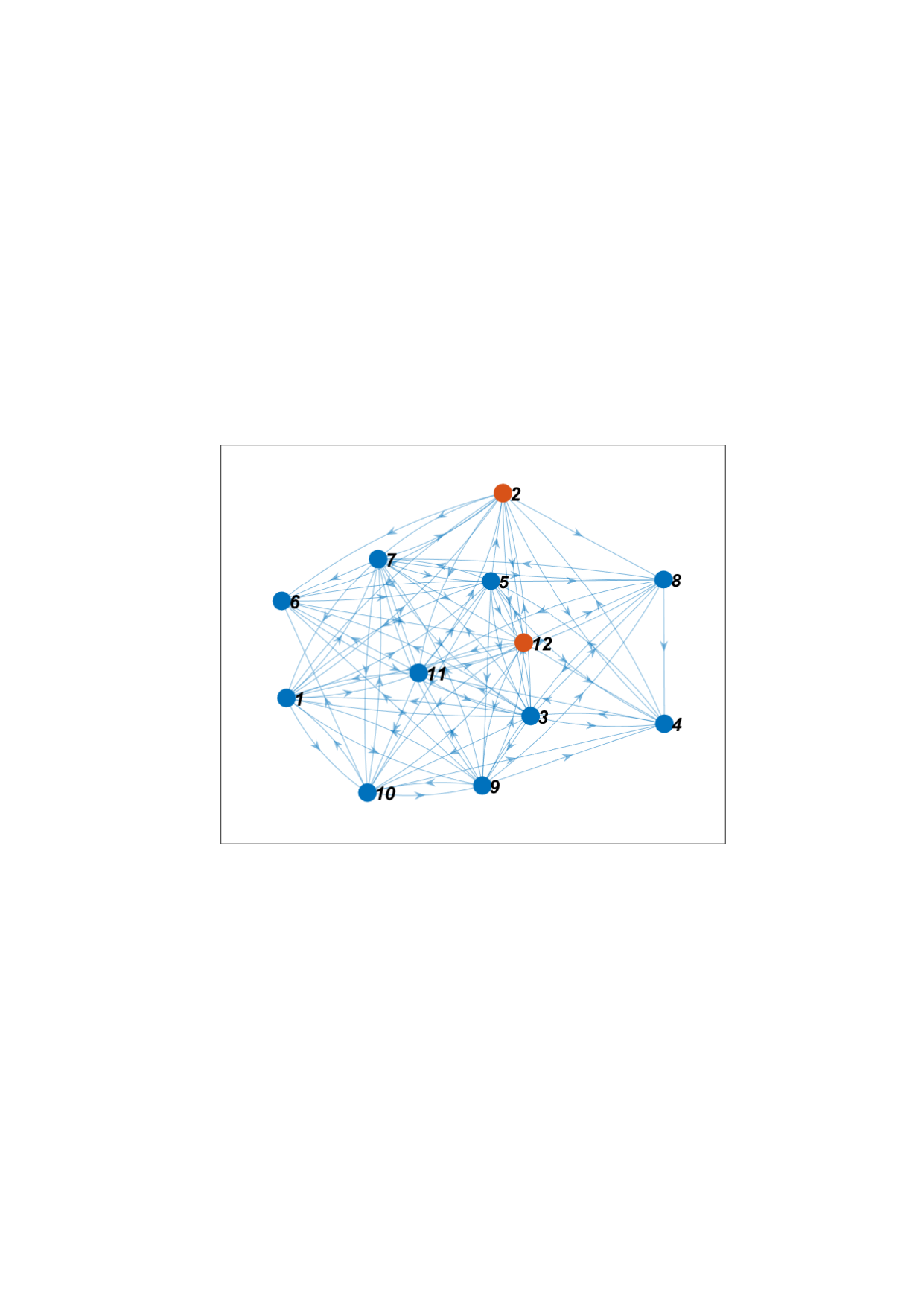}
	\caption{Communication network}
	\label{digraph}
\end{figure} 

We suppose the network is under the following 4 types of DoS attacks as shown in Fig. \ref{DDoS attacks types}. Different DoS attack types are launched on different edges to simulate the random behaviors of DoS attackers.

\begin{figure}[h]
	\centering
	\includegraphics[scale=0.55]{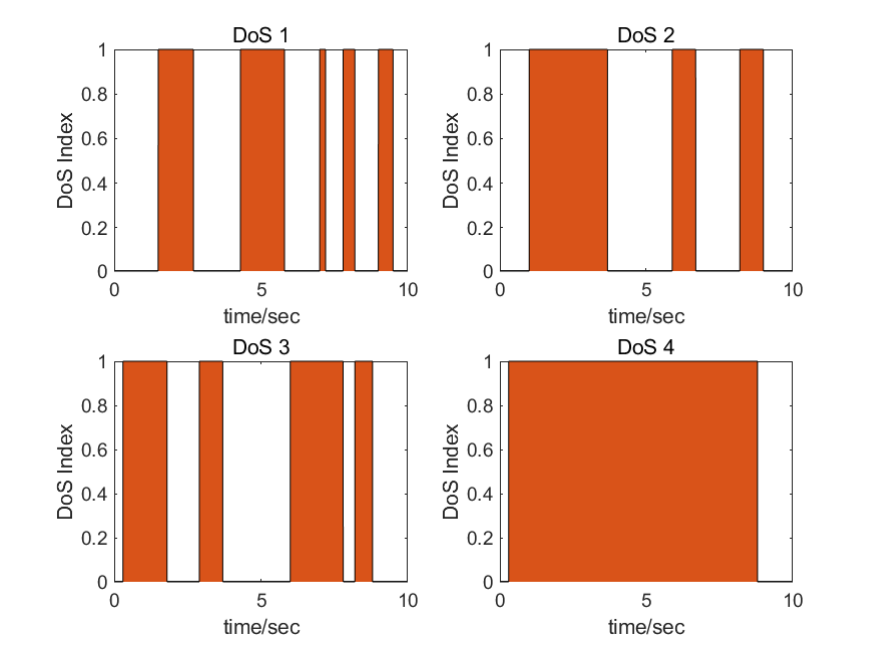}
	\caption{Types of DoS Attacks}
	\label{DDoS attacks types}
\end{figure}

Details for benign agents' in-neighboring agents' index, initial states, local cost functions, and edge-targeted DoS attacks are presented in Table \ref{tab:your_label}, where $(i,j)$ denotes the edge $e_{ij}$.
\begin{table*}[h!t]
\scriptsize
\centering
\begin{tabular}{|c|c|c|c|c|}
\hline
\textbf{Index} & \textbf{In-neighboring Nodes} & \makecell{\textbf{Initial} \\ \textbf{States}} & \textbf{DoS Target Edges} & \textbf{Cost functions} \\ \hline
1 & 2, 3, 5, 7, 9, 10, 11, 12 & [0;0] & DoS\_1: (1, 3), (1, 7), (1, 9) & \( f_1 = x_1(1)^2 + x_1(2)^2 \) \\ \hline
3 & 1, 2, 4, 5, 6, 7, 11 & [0;4] & DoS\_1: (3, 4), (3, 7), (3, 11) \& DoS\_3: (3, 2), (3, 6) & \( f_3 = |x_3(1)+x_3(2)| \) \\ \hline
4 & 2, 3, 5, 8, 9, 10, 12 & [2;8] & DoS\_2: (4, 2), (4, 3) & \( f_4 = \sqrt{x_4(1)^2+x_4(2)^2} \) \\ \hline
5 & 1, 3, 4, 6, 7, 11, 12 & [4;12] & DoS\_1: (5, 1), (5, 11)\& \ DoS\_3: (5, 3), (5, 7) & \( f_5 = (x_5(1)-1)^2+(x_5(2)-0.5)^2 \) \\ \hline
6 & 2, 3, 5, 7, 9, 10, 11 & [6;18] & DoS\_1: (6, 2), (6, 10) & \( f_6 = \max(x_6(1), x_6(2)) \) \\ \hline
7 & 1, 2, 3, 5, 8, 9, 10, 11 & [8;10] & DoS\_2: (7, 5), (7, 9), (7, 10)  \& DoS\_3: (7, 1), (7, 2), (7, 3) & \( f_7 = (x_7(1)+0.3)^2+(x_7(1)-0.2)^2 \) \\ \hline
8 & 2, 3, 5, 7, 9, 11, 12 & [10;7] & DoS\_2: (8, 3), (8, 5) & \( f_8 = |x_8(1)| + |x_8(2)| \) \\ \hline
9 & 1, 3, 5, 7, 8, 10, 12 & [10;6] & DoS\_2: (9, 7), (9, 8) & \( f_9 = |x_9(1)| + x_9(2)^2 \) \\ \hline
10 & 1, 3, 5, 7, 9, 11, 12 & [10;2] & DoS\_1: (10, 1), (10, 12) & \( f_{10} = x_{10}(1)^2+x_{10}(2)^2+|x_{10}(1)-x_{10}(2)| \) \\ \hline
11 & 1, 3, 4, 5, 8, 9, 12 & [10;0] & DoS\_3: (11, 1), (11, 3), (11, 12) & \( f_{11} = \max(|x_{11}(1)|, |x_{11}(2)|) \) \\ \hline
\end{tabular}
\caption{Configuration details of benign agents}
\label{tab:your_label}
\end{table*}
The sampling interval for all simulations is set to be $T=t_{k+1}-t_k=0.5\text{sec}$.

\subsection{Resilient multi-dimensional consensus}
The resilient consensus simulation result by applying Algorithm 1 is given in Fig. \ref{Algo 1 result}. The initial states of benign agents are labeled as $'\times'$ and those of faulty agents as $'\bigcirc'$. This result demonstrates the robustness of Algorithm 1 against both agent-based and DoS attacks. Furthermore, the effectiveness of our proposed policy can be verified by giving the result in Fig. \ref{Algo 1 yan result}.
\begin{figure}[h]
	\centering
	\includegraphics[scale=0.55]{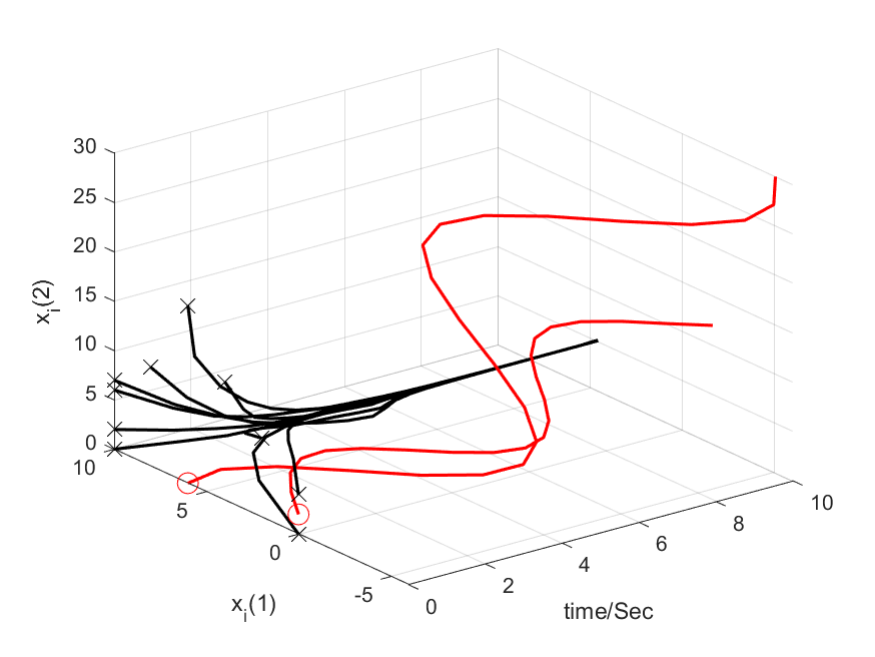}
	\caption{Trajectories of states by applying Algorithm 1 ($x_i(1)$ and $x_i(2)$ against time $t_k$).  The red curves denote the trajectories of malicious agents' states.}
	\label{Algo 1 result}
\end{figure}

\begin{figure}[h]
	\centering
	\includegraphics[scale=0.55]{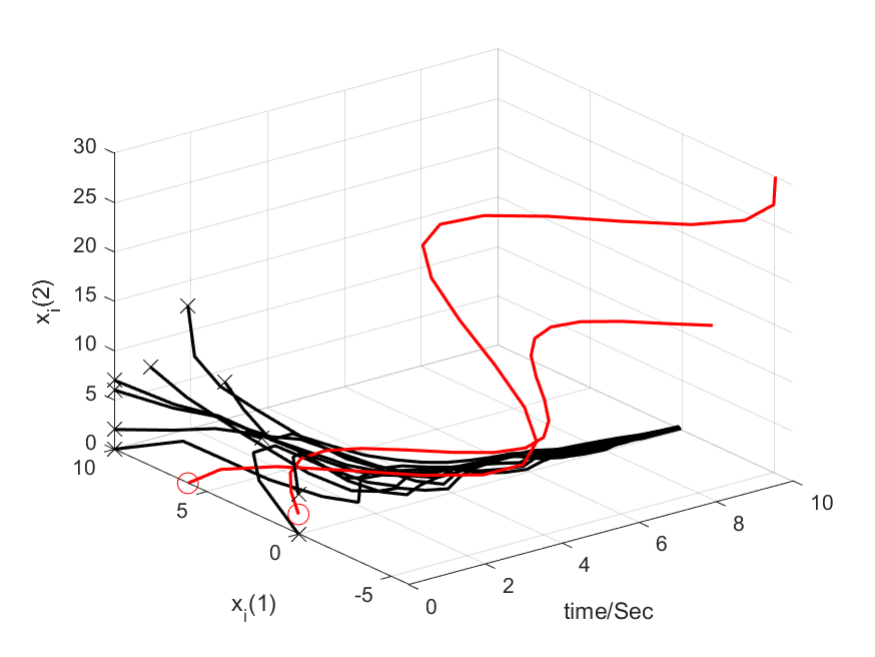}
	\caption{Trajectories of states by applying the algorithm in \cite{yan2022resilient} ($x_i(1)$ and $x_i(2)$ against time $t_k$).  The red curves denote the trajectories of malicious agents' states.}
	\label{Algo 1 yan result}
\end{figure}

From Fig. \ref{Algo 1 yan result}, the resilient consensus is not guaranteed, and all benign agents' states are driven to $[0;0]$. This phenomenon is mainly because if the communication links are blocked, no information is received from in-neighboring agents and thus their states become $[0;0]$ through these links. The large gaps between benign agents' states and $[0;0]$ will make some benign in-neighbors change into stubborn agents and remain at $[0;0]$ during DoS attacks. This may exceed the ``auxiliary point" method's maximum tolerance of faulty agents. To this end, our proposed policy is necessary.

\subsection{Resilient multi-dimensional distributed optimization}
The RMDO simulation settings are presented in Table \ref{tab:your_label}. The initial states, communication topology, agent-based and DoS attacks are the same as those of the resilient consensus simulation. The local cost functions of benign agents are convex, but some of them are not differentiable such as $f_3$ and $f_4$. Since the malicious agents do not update their states according to the prescribed algorithm, only their states will influence the achievement of resilient distributed optimization rather than their local cost functions. Therefore, omitting the local cost functions for agents 2 and 12 still satisfies the requirements of $k$-redundant. For simplicity and clarity, the step size is set as $\beta_{i,k} = 1/(5t_k + 1)$ for all healthy agents.  The trajectories of agents' states by applying Algorithm 2 are presented in Fig. \ref{Algo 2 result} and the value and changing rate of the global cost function are given in Fig. \ref{f global result}.

\begin{figure}[h]
	\centering
	\includegraphics[scale=0.55]{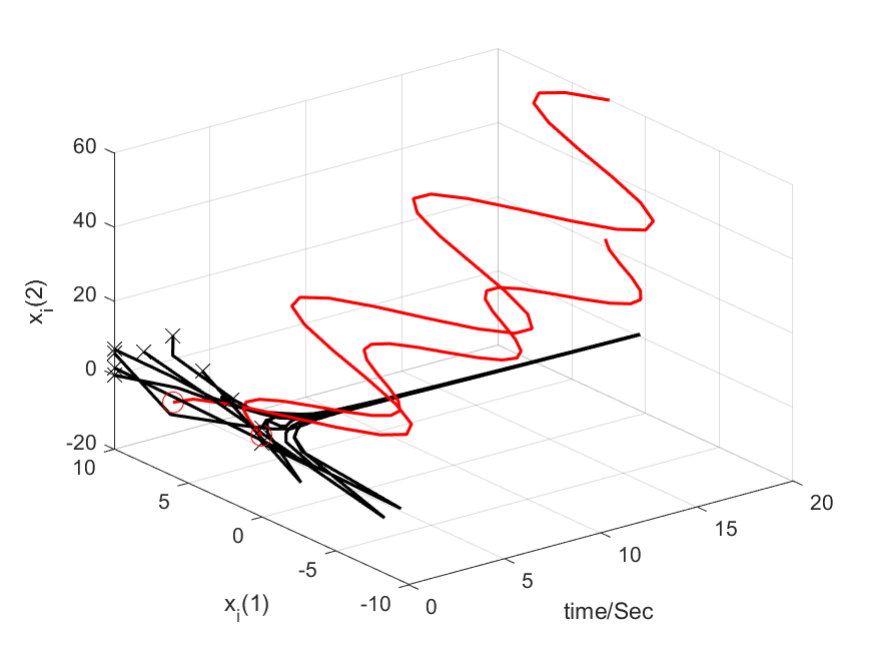}
	\caption{Trajectories of states by appling Algorithm 2  ($x_i(1)$ and $x_i(2)$ against time $t_k$).  The red curves denote the trajectories of malicious agents' states.}
	\label{Algo 2 result}
\end{figure}

\begin{figure}[h]
	\centering
	\includegraphics[scale=0.55]{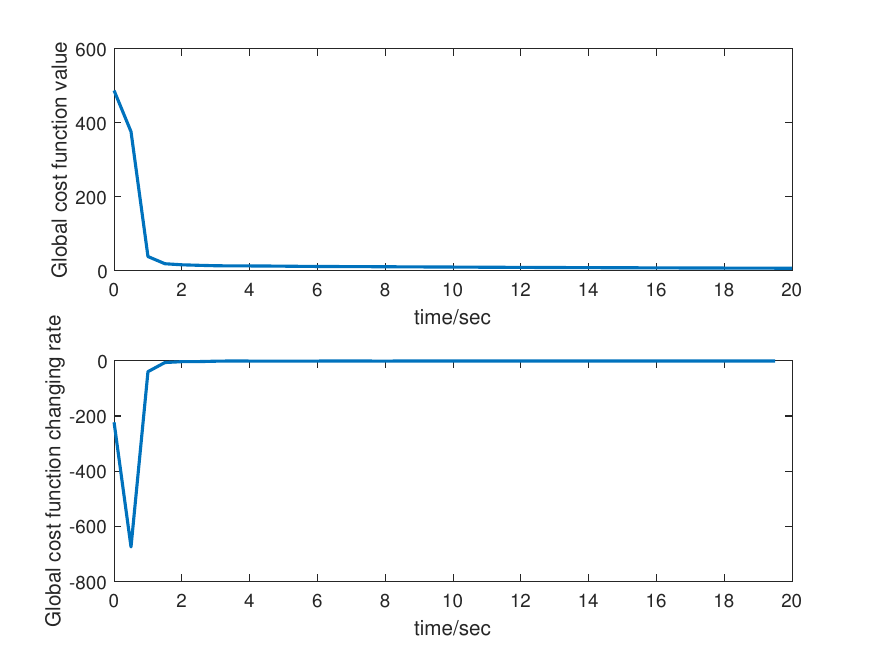}
	\caption{Global cost function value and its changing rate by applying Algorithm 2 against time $t_k$.}
	\label{f global result}
\end{figure}

From Fig. \ref{Algo 2 result} and Fig. \ref{f global result}, we can tell both the consensus condition (\ref{Consensus}) and optimization target (\ref{optimization target}) are achieved by applying Algorithm 2.  Owing to the complexity of the global cost function, directly computing its minimum is significantly difficult. Fig. \ref{f global result} indicates the global cost function value will settle at a minimum value asymptotically. Since the global cost function is convex, we can conclude that the exact convergence to the global minimizer has been achieved.

\section{Conclusion}
We propose resilient multi-dimensional consensus and distributed optimization algorithms for MASs against the influence of agent-based and edge-targeted DoS attacks. As for the resilient consensus algorithm, our proposed algorithm utilizes the immediate in-neighboring states before DoS attacks to enhance the robustness of the previous ``auxiliary point" method against both agent-based and edge-targeted attacks. Considering the challenges of RMDO problems, we combine the subgradient method with our proposed resilient consensus algorithm to achieve an exact convergence to the global optimizer under fully distributed settings. All theoretical results are established through mathematical analysis and verified by MATLAB simulation. 

\bibliography{manuscript} 

\begin{thebibliography}{10}

\bibitem{ren2007information}
W.~Ren, R.~W. Beard, and E.~M. Atkins, ``Information consensus in multivehicle cooperative control,'' {\em IEEE Control Systems Magazine}, vol.~27, no.~2, pp.~71--82, 2007.

\bibitem{wang2023adaptive}
M.~Wang, Y.~Xie, and S.~Qin, ``An adaptive memristor-programming neurodynamic approach to nonsmooth nonconvex optimization problems,'' {\em IEEE Transactions on Systems, Man, and Cybernetics: Systems}, 2023.

\bibitem{hai2023novel}
X.~Hai, H.~Qiu, C.~Wen, and Q.~Feng, ``A novel distributed situation awareness consensus approach for uav swarm systems,'' {\em IEEE Transactions on Intelligent Transportation Systems}, 2023.

\bibitem{ye2023distributed}
M.~Ye, Q.-L. Han, L.~Ding, and S.~Xu, ``Distributed nash equilibrium seeking in games with partial decision information: A survey,'' {\em Proceedings of the IEEE}, vol.~111, no.~2, pp.~140--157, 2023.

\bibitem{mo2011cyber}
Y.~Mo, T.~H.-J. Kim, K.~Brancik, D.~Dickinson, H.~Lee, A.~Perrig, and B.~Sinopoli, ``Cyber--physical security of a smart grid infrastructure,'' {\em Proceedings of the IEEE}, vol.~100, no.~1, pp.~195--209, 2011.

\bibitem{he2021secure}
W.~He, W.~Xu, X.~Ge, Q.-L. Han, W.~Du, and F.~Qian, ``Secure control of multiagent systems against malicious attacks: A brief survey,'' {\em IEEE Transactions on Industrial Informatics}, vol.~18, no.~6, pp.~3595--3608, 2021.

\bibitem{zhou2021resilient}
J.~Zhou, Y.~Lv, G.~Wen, and X.~Yu, ``Resilient consensus of multiagent systems under malicious attacks: Appointed-time observer-based approach,'' {\em IEEE Transactions on Cybernetics}, vol.~52, no.~10, pp.~10187--10199, 2021.

\bibitem{silvestre2017stochastic}
D.~Silvestre, P.~Rosa, J.~P. Hespanha, and C.~Silvestre, ``Stochastic and deterministic fault detection for randomized gossip algorithms,'' {\em Automatica}, vol.~78, pp.~46--60, 2017.

\bibitem{ramos2023discrete}
G.~Ramos, D.~Silvestre, and C.~Silvestre, ``A discrete-time reputation-based resilient consensus algorithm for synchronous or asynchronous communications,'' {\em IEEE Transactions on Automatic Control}, 2023.

\bibitem{zhao2017resilient}
C.~Zhao, J.~He, and J.~Chen, ``Resilient consensus with mobile detectors against malicious attacks,'' {\em IEEE Transactions on Signal and Information Processing over Networks}, vol.~4, no.~1, pp.~60--69, 2017.

\bibitem{yuan2021secure}
L.~Yuan and H.~Ishii, ``Secure consensus with distributed detection via two-hop communication,'' {\em Automatica}, vol.~131, p.~109775, 2021.

\bibitem{luo2023secure}
X.~Luo, C.~Zhao, and J.~He, ``Secure multi-dimensional consensus algorithm against malicious attacks,'' {\em Automatica}, vol.~157, p.~111224, 2023.

\bibitem{dolev1986reaching}
D.~Dolev, N.~A. Lynch, S.~S. Pinter, E.~W. Stark, and W.~E. Weihl, ``Reaching approximate agreement in the presence of faults,'' {\em Journal of the ACM (JACM)}, vol.~33, no.~3, pp.~499--516, 1986.

\bibitem{kieckhafer1994reaching}
R.~M. Kieckhafer and M.~H. Azadmanesh, ``Reaching approximate agreement with mixed-mode faults,'' {\em IEEE Transactions on Parallel and Distributed Systems}, vol.~5, no.~1, pp.~53--63, 1994.

\bibitem{vaidya2012iterative}
N.~H. Vaidya, L.~Tseng, and G.~Liang, ``Iterative approximate byzantine consensus in arbitrary directed graphs,'' in {\em Proceedings of the 2012 ACM symposium on Principles of distributed computing}, pp.~365--374, 2012.

\bibitem{zhang2012robustness}
H.~Zhang and S.~Sundaram, ``Robustness of information diffusion algorithms to locally bounded adversaries,'' in {\em 2012 American Control Conference (ACC)}, pp.~5855--5861, IEEE, 2012.

\bibitem{leblanc2013resilient}
H.~J. LeBlanc, H.~Zhang, X.~Koutsoukos, and S.~Sundaram, ``Resilient asymptotic consensus in robust networks,'' {\em IEEE Journal on Selected Areas in Communications}, vol.~31, no.~4, pp.~766--781, 2013.

\bibitem{wu2017secure}
Y.~Wu and X.~He, ``Secure consensus control for multiagent systems with attacks and communication delays,'' {\em IEEE/CAA Journal of Automatica Sinica}, vol.~4, no.~1, pp.~136--142, 2017.

\bibitem{dibaji2017resilient}
S.~M. Dibaji and H.~Ishii, ``Resilient consensus of second-order agent networks: Asynchronous update rules with delays,'' {\em Automatica}, vol.~81, pp.~123--132, 2017.

\bibitem{leblanc2017resilient}
H.~J. LeBlanc and X.~Koutsoukos, ``Resilient first-order consensus and weakly stable, higher order synchronization of continuous-time networked multiagent systems,'' {\em IEEE Transactions on Control of Network Systems}, vol.~5, no.~3, pp.~1219--1231, 2017.

\bibitem{xiang2016brief}
Z.~Xiang and N.~H. Vaidya, ``Brief announcement: Relaxed byzantine vector consensus,'' in {\em Proceedings of the 28th ACM Symposium on Parallelism in Algorithms and Architectures}, pp.~401--403, 2016.

\bibitem{yan2020safe}
J.~Yan, Y.~Mo, X.~Li, and C.~Wen, ``A “safe kernel” approach for resilient multi-dimensional consensus,'' {\em IFAC-PapersOnLine}, vol.~53, no.~2, pp.~2507--2512, 2020.

\bibitem{park2017fault}
H.~Park and S.~A. Hutchinson, ``Fault-tolerant rendezvous of multirobot systems,'' {\em IEEE Transactions on Robotics}, vol.~33, no.~3, pp.~565--582, 2017.

\bibitem{yan2020resilient}
J.~Yan and C.~Wen, ``Resilient containment control in adversarial environment,'' {\em IEEE Transactions on Control of Network Systems}, vol.~7, no.~4, pp.~1951--1959, 2020.

\bibitem{chen2023resilient}
H.~Chen, X.~Li, C.~Wen, and X.~Fang, ``Resilient synchronization of networked lagrangian systems in adversarial environments,'' in {\em 2023 62nd IEEE Conference on Decision and Control (CDC)}, pp.~7539--7545, IEEE, 2023.

\bibitem{wang2018resilient}
X.~Wang, S.~Mou, and S.~Sundaram, ``A resilient convex combination for consensus-based distributed algorithms,'' {\em arXiv preprint arXiv:1806.10271}, 2018.

\bibitem{yan2022resilient}
J.~Yan, X.~Li, Y.~Mo, and C.~Wen, ``Resilient multi-dimensional consensus in adversarial environment,'' {\em Automatica}, vol.~145, p.~110530, 2022.

\bibitem{9895458}
X.~Li and J.~Wang, ``Resilient consensus for networked lagrangian systems with disturbances and dos attacks,'' {\em IEEE Transactions on Systems, Man, and Cybernetics: Systems}, vol.~53, no.~3, pp.~1644--1655, 2023.

\bibitem{9345784}
Z.~Zuo, X.~Cao, Y.~Wang, and W.~Zhang, ``Resilient consensus of multiagent systems against denial-of-service attacks,'' {\em IEEE Transactions on Systems, Man, and Cybernetics: Systems}, vol.~52, no.~4, pp.~2664--2675, 2022.

\bibitem{pirani2023graph}
M.~Pirani, A.~Mitra, and S.~Sundaram, ``Graph-theoretic approaches for analyzing the resilience of distributed control systems: A tutorial and survey,'' {\em Automatica}, vol.~157, p.~111264, 2023.

\bibitem{gupta2021byzantine}
N.~Gupta, T.~T. Doan, and N.~H. Vaidya, ``Byzantine fault-tolerance in decentralized optimization under 2f-redundancy,'' in {\em 2021 American Control Conference (ACC)}, pp.~3632--3637, IEEE, 2021.

\bibitem{yemini2022resilience}
M.~Yemini, A.~Nedi{\'c}, S.~Gil, and A.~J. Goldsmith, ``Resilience to malicious activity in distributed optimization for cyberphysical systems,'' in {\em 2022 IEEE 61st Conference on Decision and Control (CDC)}, pp.~4185--4192, IEEE, 2022.

\bibitem{sundaram2018distributed}
S.~Sundaram and B.~Gharesifard, ``Distributed optimization under adversarial nodes,'' {\em IEEE Transactions on Automatic Control}, vol.~64, no.~3, pp.~1063--1076, 2018.

\bibitem{zhu2023resilient}
J.~Zhu, Y.~Lin, A.~Velasquez, and J.~Liu, ``Resilient distributed optimization,'' in {\em 2023 American Control Conference (ACC)}, pp.~1307--1312, IEEE, 2023.

\bibitem{xia2014sarymsakov}
W.~Xia and M.~Cao, ``Sarymsakov matrices and asynchronous implementation of distributed coordination algorithms,'' {\em IEEE Transactions on Automatic Control}, vol.~59, no.~8, pp.~2228--2233, 2014.

\bibitem{seneta1979coefficients}
E.~Seneta, ``Coefficients of ergodicity: structure and applications,'' {\em Advances in applied probability}, vol.~11, no.~3, pp.~576--590, 1979.

\bibitem{xu2019distributed}
W.~Xu, G.~Hu, D.~W. Ho, and Z.~Feng, ``Distributed secure cooperative control under denial-of-service attacks from multiple adversaries,'' {\em IEEE Transactions on Cybernetics}, vol.~50, no.~8, pp.~3458--3467, 2019.

\bibitem{deng2020mas}
C.~Deng and C.~Wen, ``{MAS}-based distributed resilient control for a class of cyber-physical systems with communication delays under dos attacks,'' {\em IEEE Transactions on Cybernetics}, vol.~51, no.~5, pp.~2347--2358, 2020.

\bibitem{an2018decentralized}
L.~An and G.-H. Yang, ``Decentralized adaptive fuzzy secure control for nonlinear uncertain interconnected systems against intermittent dos attacks,'' {\em IEEE Transactions on Cybernetics}, vol.~49, no.~3, pp.~827--838, 2018.

\bibitem{nedic2010constrained}
A.~Nedic, A.~Ozdaglar, and P.~A. Parrilo, ``Constrained consensus and optimization in multi-agent networks,'' {\em IEEE Transactions on Automatic Control}, vol.~55, no.~4, pp.~922--938, 2010.

\bibitem{kolmogoroff1936theorie}
A.~Kolmogoroff, ``Zur theorie der markoffschen ketten,'' {\em Mathematische Annalen}, vol.~112, no.~1, pp.~155--160, 1936.

\end{thebibliography}
\bibliographystyle{ieeetr}

\end{document}